\RequirePackage{amsmath}
\RequirePackage{amsthm}

\documentclass[runningheads]{llncs}

\usepackage{amssymb}    %
\usepackage{amsfonts}
\usepackage{graphicx}   %
\usepackage{wrapfig}    %
\usepackage[backend=bibtex]{biblatex} %
\usepackage{xcolor}     %
\usepackage{hyperref}   %
\usepackage[capitalize]{cleveref} %
\usepackage[colorinlistoftodos]{todonotes} %
\usepackage{listings}   %
\usepackage{framed}     %
\usepackage[nounderscore]{syntax} %
\usepackage{semantic}   %
\usepackage[inline]{enumitem} %
\usepackage{verbatim}   %
\usepackage{etoolbox}   %
\usepackage{syntax}     %
\usepackage{stmaryrd}   %
\usepackage{cancel}       %
\usepackage{adjustbox}  %
\usepackage{cutwin}     %
\usepackage{fontawesome} %

\newcommand{\orcid}[1]{}

\setlength{\FrameSep}{2pt}
\newcommand{\framegather}[0]{\vspace*{-0.3cm}}

\usepackage{tikz}
\usepackage{tikz-cd}    %

\hypersetup{pdfpagemode=UseOutlines, colorlinks=true, linkcolor=blue, citecolor=blue}

\BeforeBeginEnvironment{wrapfigure}{\setlength{\intextsep}{0pt}}

\listfiles

\newcommand{\nat}{\Longrightarrow} %
\newcommand{\pr}{^\prime}
\newcommand{\prpr}{^{\prime\prime}}
\newcommand{\init}{\bot} %
\newcommand{\term}{\top} %

\newcommand{\cat}[1]{{\mathbb{#1}}}
\newcommand{\Set}{\mathbf{Set}}

\newcommand{\Id}{\mathrm{Id}}
\newcommand{\id}{\mathrm{id}}

\newcommand{\hole}{\mathrm{hole}}
\newcommand{\con}{\mathrm{con}}

\newcommand{\Proj}{\pi}
\newcommand{\Inj}{i}
\newcommand{\strength}{\mathrm{st}}

\newcommand{\plug}[1]{\left\llbracket #1 \right\rrbracket}

\newcommand{\bisim}{\sim_{\mathrm{bis}}}
\newcommand{\bisimspan}{(\bisim)}

\newif\ifedit

\ifedit
\newcommand{\ST}[1]{\textcolor{purple}{ST: #1}}
\newcommand{\STin}[1]{\todo[color=purple!30,inline]{Stelios: #1}}
\newcommand{\AN}[1]{\textcolor{brown}{AN: #1}}
\newcommand{\ANin}[1]{\todo[color=yellow!30,inline]{Andreas: #1}}
\newcommand{\DD}[1]{\textcolor{brown}{DD: #1}}
\newcommand{\DDin}[1]{\todo[color=yellow!30,inline]{Dominique: #1}}
\else
\newcommand{\ST}[1]{}
\newcommand{\STin}[1]{}
\newcommand{\AN}[1]{}
\newcommand{\ANin}[1]{}
\newcommand{\DD}[1]{}
\newcommand{\DDin}[1]{}
\fi

\newcommand{\arrow}[2]{%
  $\begin{tikzcd}[ampersand replacement=\&] #1 \arrow[r, shift left=0.3ex] \& #2 \end{tikzcd}$}

\newcommand{\goes}[2]{\ensuremath{#1 \rightarrow #2}}

\newcommand{\rets}[2]{\ensuremath{#1 \Downarrow #2}}

\newcommand{\goesv}[3]{\ensuremath{#1 \rightarrow_{#3} #2}}

\newcommand{\retsv}[3]{\ensuremath{#1 \Downarrow_{#3} #2}}

\newcommand{\retsvr}[3]{\ensuremath{#1~\textcolor{red}{\Downarrow_{#3}}~#2}}

\newcommand{\while}{\emph{While}}

\newcommand{\flagsym}{{\text{\faBellO}}}

\newcommand{\secflagsym}{{\text{\faBellSlashO}}}
\newcommand{\whilep}{$\mathit{While}_{\flagsym}$}
\newcommand{\whiles}{$\mathit{While}_{\secflagsym}$}
\newcommand{\intsym}{{\mathbb{Z}}}
\newcommand{\whilez}{$\mathit{While}_{\intsym}$}
\newcommand{\whileb}{$\mathit{While}_{B}$}
\newcommand{\low}{$\mathit{Low}$}
\newcommand{\whilestack}{$\mathit{Stack}$}

\newcommand{\set}{\mathbf{Set}}

\bibliography{mainBiblio}

\begin{document}
\title{A categorical approach to secure compilation}

\author{Stelios Tsampas\inst{1}\orcid{0000-0001-8981-2328} \and
Andreas Nuyts\inst{1}\orcid{0000-0002-1571-5063} \and
Dominique Devriese\inst{2}\orcid{0000-0002-3862-6856} \and
Frank Piessens\inst{1}\orcid{0000-0001-5438-153X}}

\authorrunning{Tsampas et al.}

\institute{KU Leuven, Leuven, Belgium \email{name.surname@cs.kuleuven.be}\and
Vrije Universiteit Brussel, Brussels, Belgium
\email{dominique.devriese@vub.be}}
\maketitle              %
\begin{abstract}
We introduce a novel approach to secure compilation based on maps of
distributive laws. We demonstrate through four examples that the 
coherence criterion for maps of distributive laws can potentially be a viable
alternative for compiler security instead of full abstraction, which is the
preservation and reflection of contextual equivalence. To that end, we also make
use of the well-behavedness properties of distributive laws to construct a
categorical argument for the contextual connotations of bisimilarity.

\keywords{Secure compilation \and Distributive laws \and Structural Operational Semantics}
\end{abstract}

\section{Introduction}
\label{sec:intro}

As a field, secure compilation is the study of compilers that formally preserve
abstractions across languages. Its roots can be tracked back to the seminal
observation made by Abadi~\cite{DBLP:conf/ecoopw/Abadi99}, namely that compilers
which do not protect high-level abstractions against low-level contexts might
introduce security vulnerabilities. But it was the advent of secure
architectures like the Intel 
SGX~\cite{DBLP:journals/iacr/CostanD16} and an ever-increasing need for computer
security that motivated researchers to eventually work on formally proving
compiler security.

The most prominent~\cite{DBLP:conf/popl/DevriesePP16,
  DBLP:conf/popl/FournetSCDSL13, van_strydonck_linear_2019,
  DBLP:conf/csfw/PatrignaniDP16,
  DBLP:journals/toplas/PatrignaniAS0CP15, DBLP:conf/icfp/NewBA16, skorstengaard_stktokens:_2019} formal 
criterion for compiler security is \emph{full
  abstraction}: A compiler is fully abstract if it preserves and reflects
Morris-style contextual equivalence~\cite{morris}, i.e. indistinguishability
under all program contexts, which are usually defined as programs with a hole.
The intuition is that contexts represent the ways an attacker can interact with
programs and so full abstraction ensures that such interactions are
consistent between languages.

Full abstraction is arguably a strong and useful property but it is also
notoriously hard to prove for realistic compilers, mainly due to the
inherent challenge of having to reason directly about program
contexts~\cite{DBLP:journals/toplas/PatrignaniAS0CP15,
  DBLP:conf/popl/FournetSCDSL13, DBLP:conf/icfp/AhmedB11,
  DBLP:conf/csfw/JagadeesanPRR11}. There is thus a need for better formal
methods, a view shared in the scientific
community~\cite{ahmed_et_al:DR:2018:9891, Patrignani2019FormalAT}. While recent
work has proposed generalizing from full abstraction towards the so-called \emph{robust}
properties~\cite{DBLP:conf/esop/PatrignaniG19, abate2018journey}, the main
challenge of quantifying over program contexts remains, which manifests when
directly translating target contexts to the source (\emph{back-translation}).
Other techniques, such 
as trace semantics~\cite{DBLP:conf/csfw/PatrignaniDP16} or logical
relations~\cite{DBLP:journals/corr/abs-1103-0510}, require complex correctness
and completeness proofs w.r.t. contextual equivalence in order to be applicable.

In this paper we introduce a novel, categorical approach to secure compilation.
The approach has two main components: the elegant representation of Structural
Operational Semantics (SOS)~\cite{DBLP:journals/jlp/Plotkin04a} using
category-theoretic \emph{distributive laws}~\cite{DBLP:conf/lics/TuriP97}~\footnote{The
  authors use the term ``Mathematical Operational Semantics''. The term
  ``Bialgebraic Semantics'' is also used in the literature.} and also \emph{maps of
distributive laws}~\cite{DBLP:journals/entcs/PowerW99,
DBLP:journals/entcs/Watanabe02, DBLP:conf/calco/KlinN15} as secure compilers
that preserve\ST{ and possibly reflect} bisimilarity. Our method aims to be
unifying, in that there is a general, shared formalism for operational
semantics, and simplifying, in that the formal criterion for compiler security,
the \emph{coherence criterion} for maps of distributive laws, is straightforward
and relatively easy to prove.

The starting point of our contributions is an abstract proof on how
coalgebraic bisimilarity under distributive laws holds \emph{contextual} meaning
in a manner similar to contextual equivalence (\Cref{sec:proof}). We argue that
this justifies the use of the coherence criterion for testing compiler security
as long as bisimilarity adequately captures the underlying threat model.
We then demonstrate the effectiveness of our approach by appeal to four
examples of compiler (in)security. The examples
model classic, non-trivial problems in secure compilation: %
\begin{itemize}
  \item An example of an extra processor register in the target language that
    conveys additional information about computations (\Cref{sec:ex1}).
  \item A datatype mismatch between the type of variable (\Cref{sec:state}).
  \item The introduction of illicit control flow in the target language (\Cref{sec:control})
  \item A case of incorrect local state encapsulation (\Cref{sec:local}).
\end{itemize}
For each of these examples we present an insecure compiler that fails the
coherence criterion, then introduce \emph{security primitives} in the target
language and construct a secure compiler that respects it. We also
examine how bisimilarity can be both a blessing and a curse as its strictness
and rigidity sometimes lead to relatively contrived solutions. Finally,
in~\Cref{sec:proscons}, we discuss related work and point out potential
avenues for further development of the underlying theory.

\subsubsection{On the structure and style of the paper}

This work is presented mainly in the style of programming language semantics but
its ideas are deeply rooted in category theory. We follow an ``on-demand''
approach when it comes to important categorical concepts: we begin the first
example by introducing the base language used throughout the paper, \while{},
and gradually present distributive laws when required. From the second
example in~\Cref{sec:state} and on, we relax the categorical notation and mostly
remain within the style of PL semantics.

\section{The basic \while{} language}
\label{sec:while}

\subsection{Syntax and operational semantics}

We begin by defining the set of arithmetic expressions.
\begin{grammar}
  <expr> ::= \texttt{lit} $\mathbb{N}$ | \texttt{var} $\mathbb{N}$ | <expr>
  <bin> <expr> | <un> <expr>
\end{grammar}

The constructors are respectively literals, a dereference operator \texttt{var},
binary arithmetic operations as well as unary operations. We let $S$ be the set
of lists of natural numbers. The role of $S$ is that of a run-time store whose
entries are referred by their index on the list using constructor \texttt{var}.
We define function $\texttt{eval} : S \times E \to \mathbb{N}$ inductively on
the structure of expressions.

\begin{definition}[Evaluation of expressions in \while]
  \label{def:eval1}
  \begin{align*}
    & \mathtt{eval}~store~(\mathtt{lit}~n) = n \\
    & \mathtt{eval}~store~(\mathtt{var}~l) = \mathtt{get}~store~l\\
    & \mathtt{eval}~store~(e_{1}~b~e_{2}) = (\mathtt{eval}~store~e_{1})~[[b]]~(\mathtt{eval}~store~e_{2})\\
    & \mathtt{eval}~store~(u~e) = [[u]]~(\mathtt{eval}~store~e)
  \end{align*}
\end{definition}

Programs in \while{} language are generated by the following grammar:

\begin{grammar}
  <prog> ::= \texttt{skip} | $\mathbb{N}$ \texttt{:=} <expr> | <prog> ; <prog> |
  \texttt{while} <expr> <prog>
\end{grammar}

The operational semantics of our \while{} language are introduced
in~\Cref{fig:while1}. We are using the notation $\rets{s, x}{s\pr}$ to
denote that program $x$, when supplied with $s : S$, terminates producing store $s\pr$.
Similarly, $\goes{s, x}{s\pr, x\pr}$ means that program $x$, supplied with $s$,
evaluates to $x\pr$ and produces new store $s\pr$.

\begin{figure}
  \begin{framed}
    \framegather
    \begin{gather*}
      \inference{}{\rets{s, \texttt{skip}}{s}} \quad
      \inference{}
      {\rets{s, l~\texttt{:=}~e}{\texttt{update}~s~l~(\texttt{eval}~s~e)}} \quad
      \inference{\rets{s, p}{s\pr}}{\goes{s, p \texttt{;} q}{s\pr , q}} \\
      \inference{\goes{s, p}{s\pr, p\pr}}{\goes{s, p \texttt{;} q}{s\pr , p\pr
          \texttt{;} q}} \quad
      \inference{\texttt{eval}~s~e = 0}{\goes{s, \texttt{while}~
          e~p}{s ,
          \texttt{skip}}} \quad
      \inference{\texttt{eval}~s~e \neq 0}{\goes{s, \texttt{while}~
          e~p}{s , p ; \texttt{while}~e~p}}
    \end{gather*}
    \caption{Semantics of the \while{} language.}
    \label{fig:while1}
  \end{framed}
\end{figure}
\vspace*{-1.38cm} %

\subsection{\while{}, categorically}
The categorical representation of operational semantics has various forms of
incremental complexity but for our purposes we only need to use the most
important one, that of \emph{GSOS laws}~\cite{DBLP:conf/lics/TuriP97}.

\begin{definition}
Given a syntax functor $\Sigma$ and a behavior functor $B$, a GSOS law of
$\Sigma$ over $B$ is a natural transformation $\rho : \Sigma (\Id \times B) \nat
B \Sigma^{*}$, where $(\Sigma^{*}, \eta, \mu)$ is the monad freely generated by
$\Sigma$.
\end{definition}
\begin{example} \label{ex:while-functors}
	Let $E$ be the set of expressions of the \while{}-language.
	Then the syntax functor $\Sigma : \set \to \set$ for \while{} is given by $\Sigma X = \top \uplus (\mathbb N \times E) \uplus (X \times X) \uplus (E \times X)$ where $\uplus$ denotes a disjoint (tagged) union.
	The elements could be denoted as $\mathtt{skip}$, $l := e$, $x_1 ; x_2$ and $\mathtt{while}~e~x$ respectively.	
	The free monad $\Sigma^{*}$ satisfies $\Sigma^{*} X \cong X \uplus \Sigma \Sigma^{*} X$, i.e. its elements are programs featuring program variables from $X$.
	Since \while{}-programs run in interaction with a store and can terminate, the behavior functor is $BX = S \to (S \times \mathrm{Maybe}~X)$, where $S$ is the set of lists of natural numbers and $X \to Y$ denotes the exponential object (internal Hom) $Y^X$.
	
	The GSOS specification of \while{} determines $\rho$. A premise $\goes{s, p}{s', p'}$ denotes an element $(p, b) \in (\Id \times B) X$ where $b(s) = (s', \mathrm{just}~p')$, and a premise $\rets{s, p}{s'}$ denotes an element $(p, b)$ where $b(s) = (s', \mathrm{nothing})$.
	A conclusion $\goes{s, p}{s', p'}$ (where $p \in \Sigma X$ is further decorated above the line to $\bar p \in \Sigma (\Id \times B) X$) specifies that $\rho(\bar p) \in B\Sigma^{*}X$ sends $s$ to $(s', \mathrm{just}~p')$, whereas a conclusion $\rets{s, p}{s'}$ specifies that $s$ is sent to $(s', \mathrm{nothing})$.
	Concretely, $\rho_X : \Sigma(X \times BX) \to B \Sigma^{*} X$ is the function (partially from \cite{DBLP:conf/ctcs/Turi97}):
	\begin{equation*}
		\begin{array}{l c l}
			\mathtt{skip} &\mapsto& \lambda~s.(s, \mathrm{nothing}) \\
    			l~\mathtt{:=}~e &\mapsto& \lambda~s.(\mathtt{update}~s~l~(\mathtt{eval}~s~e), \mathrm{nothing}) \\
    			\mathtt{while}~e~(x,f) &\mapsto& \lambda~s.
				\begin{cases}
					(s, \mathrm{just}~(x~\mathtt{;}~\mathtt{while}~e~x)) & \mathrm{if}~\mathtt{eval}~s~e \neq 0 \\
					(s, \mathrm{just}~(\mathtt{skip})) & \mathrm{if}~\mathtt{eval}~s~e = 0
				\end{cases} \\
			(x,f)~\mathtt{;}~(y,g) &\mapsto& \lambda~s.
				\begin{cases}
					(s\pr, \mathrm{just}~(x\pr~\mathtt{;}~y)) & \mathrm{if}~f(s) = (s\pr, \mathrm{just}~x\pr) \\
					(s\pr, \mathrm{just}~y) & \mathrm{if}~f(s) = (s\pr, \mathrm{nothing}) 
				\end{cases}
			\end{array}
	\end{equation*}
\end{example}

It has been shown by Lenisa et al.~\cite{lenisa-power-watanabe} that there is a
one-to-one correspondence between GSOS laws of $\Sigma$ over $B$ and
\emph{distributive laws} of the free monad $\Sigma^{*}$ over the cofree copointed
endofunctor~\cite{lenisa-power-watanabe} $\Id \times B$.\footnote{A copointed endofunctor is an endofunctor $F$ equipped with a natural transformation $F \nat \Id$.}

\begin{definition}[In~\cite{DBLP:journals/tcs/Klin11}]
A distributive law of a monad $(T,\eta,\mu)$ over a copointed functor
$(H,\epsilon)$ is a natural transformation $\lambda : TH \nat HT$ subject to the
following laws: $\lambda \circ \eta = H\eta$, $\epsilon \circ \lambda =
T\epsilon$ and $\lambda \circ \mu = H\mu \circ \lambda \circ T\lambda$.
\end{definition}

Given any GSOS law, it is straightforward to obtain the corresponding
distributive law via structural induction~(In
\cite{DBLP:journals/entcs/Watanabe02}, prop. 2.7 and 2.8).
By convention, we shall be using the notation $\rho$ for GSOS laws and
$\rho^{*}$ for the equivalent distributive laws unless stated otherwise.

A distributive law $\lambda$ based on a GSOS law $\rho$ gives a category
$\lambda$-Bialg of $\lambda$-bialgebras~\cite{DBLP:conf/lics/TuriP97}, which are
pairs $\Sigma X \xrightarrow{h} X \xrightarrow{k} B X$ subject to the pentagonal law
$ k \circ h = Bh^{*} \circ \rho_{X} \circ \Sigma[id,k]$, where $h^{*}$ is the
inductive extension of $h$. Morphisms in $\lambda$-Bialg are arrows $X 
\rightarrow Y$ that are both algebra and coalgebra homomorphisms at the same
time. The trivial initial B-coalgebra $\init \rightarrow B\init$ lifts uniquely to the
initial $\lambda$-bialgebra $\Sigma \Sigma^{*}\init \xrightarrow{a} \Sigma^{*}\init
\xrightarrow{h_{\lambda}} B \Sigma^{*}\init$, while the trivial final
$\Sigma$-algebra $\Sigma \term \rightarrow \term$ lifts uniquely to the final
$\lambda$-bialgebra $\Sigma B^{\infty}\term \xrightarrow{g_{\lambda}} B^{\infty}\term
\xrightarrow{z} B B^{\infty}\term$~\footnote{We write $B^{\infty}$ for the
 cofree comonad over B, which satisfies $B^\infty X \cong X \times B B^\infty X$.}. Since $\Sigma^{*}\init$ is the set of
programs generated by $\Sigma$ and $B^{\infty}\term$ the set of behaviors
cofreely generated by $B$, the unique bialgebra morphism $f : \Sigma^{*}\init
\rightarrow B^{\infty}\term$ is the \emph{interpretation function} induced by
$\rho$.

\begin{remark}
  \label{rem:mod}
  We write $A$ for $\Sigma^{*}\init$ and $Z$ for
  $B^{\infty}\term$, and refer to $h_{\lambda} : A \to BA$ as the \textit{operational
    model} for $\lambda$ and to $g_{\lambda} : \Sigma Z \to Z$ as the \textit{denotational
    model}~\cite{DBLP:conf/lics/TuriP97}. 
    Note also that $a : \Sigma A \cong A$ and $z : Z \cong BZ$ are invertible.
\end{remark}
\begin{example} \label{ex:while-bialgebras}
	Continuing \cref{ex:while-functors}, the initial bialgebra $A$ is just the set of all \while{}-programs. Meanwhile, the final bialgebra $Z$, which has the meaning of the set of behaviors, satisfies $Z \cong (S \to S \times \mathrm{Maybe}~Z)$.
	In other words, our attacker model is that of an attacker who can count execution steps and moreover, between any two steps, read out and modify the state.
	In \cref{sec:proscons}, we discuss how we hope to consider weaker attackers in the future.
\end{example}

\section{An extra register (Part I)}
\label{sec:whilep}

Let us consider the scenario where a malicious party can observe \emph{more}
information about the execution state of a program, either because information
is being leaked to the environment or the programs are run by a more powerful
machine. A typical example is the  presence of an extra \emph{flags} register that logs the result of a
computation~\cite{DBLP:conf/aplas/PatrignaniCP13,DBLP:conf/csfw/AgtenSJP12,DBLP:conf/csfw/PatrignaniDP16}.
This is the intuition behind the augmented version of \while{} with additional
observational capabilities, \whilep{}.

The main difference is in the behavior so the notation for
transitions has to slightly change. The two main transition types, $\retsv{s,
  \texttt{x}}{s\pr}{v}$ and $\goesv{s, x}{s\pr, x\pr}{v}$ work similarly to \while{}
except for the label $v : \mathbb{N}$ produced when evaluating expressions.
We also allow language terms to interact with the labels by introducing
the constructor $\texttt{obs}~\mathbb{N}~\langle prog \rangle$. When terms evaluate
inside an $\texttt{obs}$ block, the labels are sequentially placed in the
run-time store. The rest of the constructors are identical but the distinction
between the two languages should be clear.

While the expressions are the same as before, the syntax functor is now $\Sigma_\flagsym X = \Sigma X \uplus \mathbb N \times X$,
and the behavior functor is $B_{\flagsym} = S \to \mathbb{N} \times S \times \mathrm{Maybe}~X$.
The full semantics can be found in~\Cref{fig:whilep}. As for \while{}, they
specify a GSOS law $\rho_{\flagsym} : \Sigma_{\flagsym} (\Id \times B_{\flagsym}) \nat B_{\flagsym}
\Sigma_{\flagsym}^{*}$.

\vspace*{-0.3cm} %
\begin{figure}
  \begin{framed}
    \framegather
    \begin{gather*}
      \inference{}{\retsv{s, \texttt{skip}}{s}{0}} \quad
      \inference{v = \texttt{eval}~s~e}
      {\retsv{s, l~\texttt{:=}~e}{\texttt{update}~s~l~v}{v}}  \quad
      \inference{v = \texttt{eval}~s~e & v \neq 0}{\goesv{s, \texttt{while}~
      e~p}{s , \texttt{skip}}{v}} \\
      \inference{\retsv{s, p}{s\pr}{v}}{\goesv{s, p ; q}{s\pr , q}{v}}  \quad
      \inference{\retsv{s, p}{s\pr}{v} & s\prpr =
        \texttt{update}~s\pr~n~v}{\goesv{s, \texttt{obs}~n~p}{s\prpr ,
          \texttt{skip}}{v}} \quad
      \inference{\goesv{s, p}{s\pr, p\pr}{v}}{\goesv{s, p ; q}{s\pr , p\pr
          ; q }{v}} \\
  \inference{\goesv{s, p}{s\pr, p\pr}{v} & s\prpr =
        \texttt{update}~s\pr~n~v}{\goesv{s, \texttt{obs}~n~p}{s\prpr ,
          \texttt{obs}~(n + 1)~p{\pr}}{v}}
       \quad
      \inference{v = \texttt{eval}~s~e & v = 0}{\goesv{s, \texttt{while}~
      e~p}{s , p ; \texttt{while}~e~p}{v}}
    \end{gather*}
    \caption{Semantics of \whilep{}.}
    \label{fig:whilep}
  \end{framed}
\end{figure}
\vspace*{-0.4cm} %

Traditionally, the (in)security of a compiler has been a matter of \emph{full
  abstraction}; a compiler is fully abstract if it
preserves and reflects Morris-style~\cite{morris} contextual equivalence. For our threat
model, where the attacker can directly observe labels, it makes sense to define
contextual equivalence in \whilep~as:
\begin{definition}
\label{ctxeq}
  $p \cong_{\flagsym} q <=> \forall c : C_\flagsym.~c\plug{p}\Downarrow <=> c\plug{q}\Downarrow$
\end{definition}
Where $C$ is the set of one-hole contexts, $\plug{\_} : C_\flagsym \times A_\flagsym -> A_\flagsym$ denotes the plugging function and we write
$p\Downarrow$ when $p$ eventually terminates. Contextual equivalence for
\while{} is defined analogously. It is easy to show that the simple
 ``embedding'' compiler from \while{} to \whilep{} is not fully abstract by
 examining terms $a \triangleq
 \texttt{while}~(\texttt{var}[0])~(0~\texttt{:=}~0)$ and $b \triangleq
 \texttt{while}~(\texttt{var}[0] * 2)~(0~\texttt{:=}~0)$, for which $a \cong b$
 but $a_{\flagsym} \ncong_{\flagsym} b_{\flagsym}$. A context $c \triangleq (\mathtt{obs}~1~\_) ;
 \mathtt{while}~(\mathtt{var[1]}-1)~\mathtt{skip}$ will log the result of the
 $\mathtt{while}$ condition in $a_{\flagsym}$ and $b_{\flagsym}$ in $\mathtt{var}[1]$ and then either
 diverge or terminate depending on the value of $\mathtt{var}[1]$. An
 initial $\mathtt{var}[0]$ value of 1 will cause $c\plug{a}$ to terminate but $c\plug{b}$ to
 diverge.

\subsubsection{Securely extending \whilep{}}

To deter malicious contexts from exploiting the extra information, we introduce
\emph{sandboxing} primitives to help hide it. We add an additional constructor
in \whilep{}, $\lbag \langle progr \rangle \rbag$, and the following inference rules to form
the secure version \whiles{} of \whilep{}.
\begin{align*}
  & \inference{\retsv{s, p}{s\pr}{v}}{\retsv{s, \lbag p \rbag}{s\pr}{0}} \qquad
    \inference{\goesv{s, p}{s\pr, p\pr}{v}}{\goesv{s, \lbag p \rbag}{s\pr , \lbag p\pr \rbag}{0}}
\end{align*}

We now consider the compiler from \while{} to \whiles{} which, along with the
obvious embedding, wraps the the translated terms in sandboxes. This looks to be
effective as programs $a$ and $b$ are now contextually equivalent and the
extra information is adequately hidden. We will show that this compiler is
indeed a \emph{map of distributive laws} between \while{} and \whiles{} but to do so we
need a brief introduction on the underlying theory.

\section{Secure compilers, categorically}

\subsection{Maps of distributive laws}
\label{sec:ctxeq}

Assume two GSOS laws $\rho_{1} : \Sigma_{1} (\Id \times B_{1}) \nat B_{1} \Sigma_{1}^{*}$ and
$\rho_{2} : \Sigma_{2} (\Id \times B_{2}) \nat B_{2} \Sigma_{2}^{*}$, where $(\Sigma_{1}^{*}, \eta_{1},
\mu_{1})$ and $(\Sigma_{2}^{*}, \eta_{2}, \mu_{2})$ are the monads freely generated by
$\Sigma_{1}$ and $\Sigma_{2}$ respectively. We shall regard
pairs of natural transformations $(\sigma : \Sigma_{1}^{*} \nat \Sigma_{2}^{*}, b : B_{1}
\nat B_{2})$ as compilers between the two semantics, where $\sigma$ acts as a
syntactic translation and $b$ as a translation between behaviors.

\begin{remark}
If $A_{1}$ and $A_{2}$ are the sets of terms freely generated by $\Sigma_{1}$ and
$\Sigma_{2}$, we can get the compiler $c : A_{1} -> A_{2}$ from $\sigma$. On the other hand, $b$
generates a function $d : Z_1 \to Z_2$ between behaviors via finality. 
\end{remark}

\begin{remark}
  We shall be writing $B^{c}$ for the cofree copointed endofunctor $\Id \times B$ over $B$ and
  $b^{c} : B^{c}_{1} \nat B^{c}_{2}$ for $\id \times b$.
\end{remark}

\begin{definition}[Adapted from~\cite{DBLP:journals/entcs/Watanabe02}]
  A map of GSOS laws from $\rho_{1}$ to $\rho_{2}$ consists of a natural
  transformation $\sigma : \Sigma_{1}^{*} \nat \Sigma_{2}^{*}$ subject to the monad
  laws $\sigma \circ \eta_{1} = \eta_{2}$ and $\sigma \circ \mu_{1} = \mu_{2}
  \circ \Sigma_{2}^{*}\sigma  \circ \sigma$ paired with a natural transformation
  $b : B_{1} \nat B_{2}$ that satisfies the following \emph{coherence criterion}:
  \begin{center}
    \adjustbox{scale=0.92,center}{
    \begin{tikzcd}
  		\Sigma_1^{*}B^{c}_{1}
  			\arrow[r, "\rho_{1}^{*}"', Rightarrow]
  			\arrow[d, "\sigma~\circ~\Sigma_1^{*}b^{c}"', Rightarrow]
  		& B^{c}_{1}\Sigma_1^{*} \arrow[d, "~b^{c}~\circ~B^{c}_{1}\sigma", Rightarrow]
  		\\
  		\Sigma_2^{*}B^{c}_{2} \arrow[r, "\rho_{2}^{*}", Rightarrow]
  		& B^{c}_{2}\Sigma_2^{*}
  	\end{tikzcd}}
  \end{center}
\end{definition}

\begin{remark} \label{rem:monad}
  \label{rem:fr}
  A natural transformation $\sigma : \Sigma_{1}^{*} \nat \Sigma_{2}^{*}$ subject to the monad
  laws is equivalent to a natural transformation $t : \Sigma_{1} \nat
  \Sigma_{2}^{*}.$
\end{remark}
\begin{theorem}
	If $\sigma$ and $b$ constitute a map of GSOS laws, then we get a compiler $c :
  A_1 \to A_2$ and behavior transformation $d : Z_1 \to Z_2$ satisfying $d \circ
  f_1 = f_2 \circ c : A_1 \to Z_2$. As bisimilarity is exactly equality in the
  final coalgebra (i.e. equality under $f_i : A_{i} \to Z_{i}$), $c$ preserves
  bisimilarity~\cite{DBLP:journals/entcs/Watanabe02}. If $d$ is a monomorphism
  (which, under mild conditions, is the case in $\set$ if every component of $b$
  is a monomorphism), then $c$ also reflects bisimilarity.
\end{theorem}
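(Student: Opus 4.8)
The plan is to exploit the universal properties of the initial and final bialgebras. Recall that $f_i : A_i \to Z_i$ is the unique bialgebra morphism from the initial $\lambda_i$-bialgebra to the final one, i.e. the interpretation function. The strategy is to show that both $d \circ f_1$ and $f_2 \circ c$ arise as bialgebra morphisms from a suitable bialgebra structure on $A_1$ into a suitable bialgebra structure on $Z_2$, and then invoke uniqueness to conclude they are equal.

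\medskip

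First I would use the coherence criterion to transport structure across the compiler. The map of GSOS laws $(\sigma, b)$ should induce, via the coherence square relating $\rho_1^{*}$ and $\rho_2^{*}$, a way of turning $\lambda_1$-bialgebras into $\lambda_2$-bialgebras by reindexing along $\sigma$ and $b$. Concretely, given the operational model $h_{\lambda_1} : A_1 \to B_1 A_1$, the pair $(c, b)$ should yield a $\lambda_2$-coalgebra structure on $A_1$ (or on its image), compatible with the algebra structure transported via $\sigma$; the coherence criterion is exactly what guarantees that the resulting pair again satisfies the pentagonal bialgebra law for $\lambda_2$. Dually, $d : Z_1 \to Z_2$ is the map on final coalgebras induced by $b$ through finality, so $d \circ z_1 = B_2 d \circ b_{Z_1} \circ z_1$ expresses that $d$ is a $B_2$-coalgebra morphism from $(Z_1, b \circ z_1)$ to $(Z_2, z_2)$.

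\medskip

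Next I would assemble these into a single uniqueness argument. Equip $A_1$ with the $\lambda_2$-bialgebra structure described above; since $A_1 \cong \Sigma_1^{*}\init$ and $c$ is the free extension of $t : \Sigma_1 \nat \Sigma_2^{*}$, the object $A_1$ with this transported structure admits a \emph{unique} $\lambda_2$-bialgebra morphism into the final $\lambda_2$-bialgebra $Z_2$. The claim is then that $d \circ f_1$ and $f_2 \circ c$ are both such morphisms. That $f_2 \circ c$ is a bialgebra morphism follows because $c$ is a bialgebra morphism (for the transported structure) by construction and $f_2$ is one by definition; that $d \circ f_1$ is one follows because $f_1$ is a $\lambda_1$-bialgebra morphism and $d$ is the $b$-induced morphism on final coalgebras, with the coherence criterion ensuring compatibility with the algebra parts. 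Uniqueness into the final bialgebra then forces $d \circ f_1 = f_2 \circ c$. Preservation of bisimilarity is immediate: if $f_1(p) = f_1(q)$ then $f_2(c(p)) = d(f_1(p)) = d(f_1(q)) = f_2(c(q))$, so $c(p) \bisim c(q)$.

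\medskip

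For reflection, suppose $d$ is a monomorphism and $c(p) \bisim c(q)$, i.e. $f_2(c(p)) = f_2(c(q))$. By the commuting square this reads $d(f_1(p)) = d(f_1(q))$, and injectivity of $d$ gives $f_1(p) = f_1(q)$, hence $p \bisim q$. The remaining point is the parenthetical claim that in $\set$, under mild conditions, $d$ is monic whenever every component of $b$ is. I expect this to be the main obstacle, since $d$ is defined by finality (an infinite unfolding) rather than componentwise, so monicity does not follow formally from monicity of $b$ alone. The intended argument is that $Z_i = B_i^{\infty}\term$ is a limit of the finite-depth approximants $B_i^n \term$, that $d$ is the limit of maps built by iterating $b$, and that a limit of monomorphisms is a monomorphism in $\set$ provided the approximating maps are monic; the ``mild conditions'' presumably ensure that each $B_2$ preserves the relevant monos (or that the chain is suitably well-behaved, e.g. $B_2$ preserves monomorphisms and weak limits of $\omega^{\mathrm{op}}$-chains) so that injectivity propagates through the limit. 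I would verify this by an induction on the approximants, showing each $d_n : B_1^n\term \to B_2^n\term$ is injective and then passing to the limit.
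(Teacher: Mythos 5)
The paper does not actually prove this theorem: it is stated as an import from Watanabe~\cite{DBLP:journals/entcs/Watanabe02}, so there is no in-paper argument to match yours against. Judged on its own, your plan has the right overall shape --- define $c$ by initiality of $A_1$, define $d$ by finality of $Z_2$, and obtain $d \circ f_1 = f_2 \circ c$ from a uniqueness argument --- but the specific uniqueness argument you propose does not typecheck. You cannot ``equip $A_1$ with a $\lambda_2$-bialgebra structure'': that would require a $\Sigma_2^{*}$-algebra structure $\Sigma_2^{*}A_1 \to A_1$, and $\sigma$ points the wrong way to transport algebras forward (if $\Sigma_2$ has constructors absent from $\Sigma_1$, e.g.\ \texttt{obs} in \whilep{}, there is simply no such map). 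The transports available are asymmetric: coalgebras push forward along $b$, algebras pull back along $\sigma$. The correct uniqueness argument therefore lives in $B_2$-Coalg (with final object $Z_2 \cong B_2^{\infty}\term$) or dually in $\Sigma_1$-Alg (with initial object $A_1$), not in $\lambda_2$-Bialg. Concretely: one shows both $d \circ f_1$ and $f_2 \circ c$ are $B_2$-coalgebra morphisms from $(A_1,\; b_{A_1} \circ h_{\lambda_1})$ to $(Z_2, z_2)$ and invokes finality of $Z_2$.

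The substantive step in that repaired argument --- that $c$ is a $B_2$-coalgebra homomorphism $(A_1, b_{A_1}\circ h_{\lambda_1}) \to (A_2, h_{\lambda_2})$, i.e.\ $h_{\lambda_2} \circ c = B_2 c \circ b_{A_1} \circ h_{\lambda_1}$ --- is exactly where the coherence criterion earns its keep, and it requires a structural induction over $A_1 \cong \Sigma_1^{*}\init$ using the algebra-initiality of $A_1$ together with the coherence square. Your proposal asserts this ``by construction,'' which is precisely the content that needs proving; without it the coherence criterion is never actually used. The remaining pieces are fine: preservation of bisimilarity is immediate from the square, reflection from monicity of $d$ is immediate, and your diagnosis that monicity of $d$ does not follow formally from monicity of the components of $b$ --- but can be obtained by showing each approximant $d_n : B_1^n\term \to B_2^n\term$ is monic and passing to the limit of the terminal sequence (using that $B_2$ preserves the relevant monos and that limits in $\set$ preserve monos) --- is a reasonable reading of the paper's deliberately vague ``mild conditions.''
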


What is very important though, is that the well-behavedness properties of the
two GSOS laws bestow \emph{contextual} meaning to bisimilarity. Recall that the
gold standard for secure compilation is contextual equivalence (\Cref{ctxeq}),
which is precisely what is observable through program contexts.
Bisimilarity is generally not the same as contextual equivalence, but we can
instead show that in the case of GSOS laws or other forms of distributive laws,
bisimilarity defines the \emph{upper bound} (most fine-grained distinction) of
observability up to program contexts. We shall do so abstractly in the next subsections.

\subsection{Abstract program contexts}

The informal notion of a \emph{context} in a programming language is that of a
program with a hole~\cite{morris}. Thus 
contexts are a syntactic construct that models external interactions with a
program: a single context is an experiment whose 
outcome is the evaluation of the subject program \emph{plugged} in the context.

Na\"ively, one may hope to represent contexts by a functor $H$ sending a set of variables $X$ to the set $HX$ of terms in $\Sigma X$ that may have holes in them.
A complication is that contexts may have holes at any depth (i.e. any number of
operators may have been applied to a hole), whereas $\Sigma X$ is the set of
terms that have exactly one operator in them, immediately applied to variables.
One solution is to think of $Y$ in $HY$ as a set of variables that do
not stand for terms, but for contexts. This approach is fine for multi-hole contexts, but if we also want to consider single-hole contexts and a given single-hole context $c$ is not the hole itself, then precisely one variable in $c$ should stand for a single-hole context, and all other variables should stand for terms.
Thus, in order to support both single- and multi-hole contexts, we make $H$ a two-argument functor, where $H_X Y$ is the set of contexts with term variables from $X$ and context variables from $Y$.

\begin{definition}
  \label{ctxF}
  Let $\cat C$ be a distributive category~\cite{DBLP:journals/mscs/Cockett93} with products $\times$,
  coproducts $\uplus$, initial object $\init$ and terminal object $\term$, as is the case for
  $\set$. A \emph{context functor} for a syntax functor $\Sigma : \mathbb{C} -> \mathbb{C}$, is a functor $H : \mathbb{C} \times \mathbb{C} -> \mathbb{C}$ (with application to $(X, Y)$ denoted as $H_X Y$) such that there
  exist natural transformations $\hole: \forall (X, Y) . \term \to H_X Y$ and
  $\con : \forall X . X \times H_X X \to X \uplus \Sigma X$ making the following
  diagram commute for all $X$:
  \begin{center}
  	\begin{tikzcd}
  		X \times \term
      \arrow{r}{\Proj_1}[swap]{\cong}
      \arrow{d}[swap]{\id_X \times \hole_{(X, X)}}
  		& X \arrow[d, "\Inj_1"]
  		\\
  		X \times H_X X \arrow[r, "\con_X"]
  		& X \uplus \Sigma X
    \end{tikzcd}
  \end{center}
\end{definition}
The idea of the transformation $\con$ is the following: it takes as input a variable $x \in X$ to be plugged into the hole, and a context $c \in H_X X$ with one layer of syntax. The functor $H_X$ is applied again to $X$ rather than $Y$ because $x$ is assumed to have been recursively plugged into the context placeholders $y \in Y$ already. We then make a case distinction: if $c$ is the hole itself, then $\Inj_1~x$ is returned. Otherwise, $\Inj_2~c$ is returned.

\begin{definition} \label{def:plug}
Let $\mathbb C$ be a category as in \cref{ctxF} and assume a syntax functor $\Sigma$ with context functor $H$.
If $\Sigma$ has an initial algebra $(A,q_A)$ (the set of programs) and $H_A$ has a strong initial algebra $(C_A,q_{C_A})$~\cite{DBLP:journals/fuin/Jacobs95} (the set of contexts), then we define the plugging function
$[\![~]\!] : A \times C_A -> A$ as the ``strong inductive extension''~\cite{DBLP:journals/fuin/Jacobs95}
of the algebra structure $[\id_A,q_A] \circ \con_{A} : A \times H_A A \to A$ on $A$, i.e. as the unique morphism that makes the following diagram commute:

\begin{center}
  \begin{tikzcd}
    A \times H_A C_A
    \arrow{rr}{id \times q_{C_{A}}}[swap]{\cong}
    \arrow[d, "{(\pi,st)}"]
    & & A \times C_A
    \arrow[dd, "{[\![~]\!]}"]\\
    A \times H_A(A \times C_A)
    \arrow[d, "{id \times H_A[\![~]\!]}"]\\
    A \times H_A A
    \arrow[r, "\con_{A}"]
    & A \uplus \Sigma A
    & A
    \arrow[from=l, "{[\id,q_A]}"] \\
  \end{tikzcd}
\end{center}
\end{definition}

The above definition of contextual functors is satisfied by both single-hole and
multi-hole contexts, the construction of which we discuss below.

\subsubsection{Multi-hole contexts}
Given a syntax functor $\Sigma$, its multi-hole context functor is simply $H_X Y =
\term \uplus \Sigma Y$. The contextual natural transformation $\con$ is the obvious
map that returns the pluggee if the given context is a hole, and otherwise the context itself (which is then a program):
  \begin{align*}
    & \con : \forall X . X \times (\term \uplus \Sigma X) \to X \uplus \Sigma X \\
    & \con \circ (\id \times i_1) = i_1 \circ \pi_1 : \forall X . X \times \term \to X \uplus \Sigma X \\
    & \con \circ (\id \times i_2) = i_2 \circ \pi_2 : \forall X . X \times \Sigma X \to X \uplus \Sigma X
 \end{align*}
The `pattern matching' is justified by distributivity of $\cat C$.
For $\hole = i_1 : \top \to \top \uplus \Sigma X$, we can see that $\con \circ (\id \times \hole) = i_1 \circ \pi_1$ as required by the definition of a context functor.

\subsubsection{Single-hole contexts}
\label{sec:contexts}

It was observed by McBride~\cite{Mcbride01thederivative} that for inductive
types, i.e. least fixpoints / initial algebras $\mu F$ of certain endofunctors $F$ called
\emph{containers}~\cite{DBLP:journals/tcs/AbbottAG05} or simply \emph{polynomials}, their single-hole contexts are
lists of $\partial F(\mu F)$ where $\partial F$ is the \emph{derivative} of $F$~\footnote{The \emph{list} operator itself arises from the derivative of the free monad operator.}.
Derivatives for containers, which were developed by Abbott et
al. in \cite{DBLP:journals/fuin/AbbottAMG05}, enable us to give a
categorical interpretation of single-hole contexts as long as the syntax functor
$\Sigma$ is a container. 

It would be cumbersome to lay down the entire theory of containers and their
derivatives, so we shall instead focus on the more restricted set of \emph{Simple Polynomial
  Functors}~\cite{DBLP:books/cu/J2016} (or SPF), used to model both syntax and
behavior. Crucially, SPF's are differentiable and
hence compatible with McBride's construction.

\STin{SPF lacks fixed points, meaning that one can't get the derivative of
  ``List X'', as List is a fixed point. Their presence is largely
  inconsequential and only introduces two weird cases when plugging in. Waste of
space if you ask me.}

\begin{definition}[Simple Polynomial Functors]
    The collection of SPF is the least set of functors $\mathbb{C} ->
    \mathbb{C}$ satisfying the following rules:
        \begin{gather*}
      \inference[id]{}{\Id \in \mathrm{SPF}} \quad
      \inference[const]{J \in \mathrm{Obj}(\mathbb{C})}
      {K_{J} \in \mathrm{SPF}} \quad
      \inference[prod]{F,G \in \mathrm{SPF}}{F \times G \in \mathrm{SPF}} \\
      \inference[coprod]{F,G \in \mathrm{SPF}}{F \uplus G \in \mathrm{SPF}} \quad
      \inference[comp]{F,G \in \mathrm{SPF}}{F \circ G \in \mathrm{SPF}}
    \end{gather*}
\end{definition}

We can now define the differentiation action $\partial : \mathrm{SPF} ->
\mathrm{SPF}$ by structural induction. Interestingly, it resembles simple
derivatives for polynomial functions. 

\begin{definition}[SPF derivation rules]
  \begin{equation*}
  	\partial \Id = \top, \quad
  	\partial K_J = \bot, \quad
  	\partial (G \uplus H) = \partial G \uplus \partial H,
  \end{equation*}
  \begin{equation*}
  	\partial (G \times H) = (\partial G \times H) \uplus (G \times \partial H), \quad
  	\partial (G \circ H) = (\partial G \circ H) \times \partial H.
  \end{equation*}
\end{definition}

\begin{example}
The definition of $\con$ for single-hole contexts might look a bit cryptic at
first sight so we shall use a small example from~\cite{Mcbride01thederivative}
to shed some light. In the case of binary trees, locating a hole in a
context can be thought of as traversing through a series of nodes,
choosing left or right according to the placement of the hole until it is found.
At the same time a record of the trees at the non-chosen branches must be kept so that in the  
end the entire structure can be reproduced.

Now, considering that the set of binary trees is the least fixed point of
functor $\mathtt{\term} \uplus (\Id \times \Id)$, then the type of ``abstract''
choice at each intersection is the functor $K_{\mathrm{Bool}} \times \Id$, where $K_{\mathrm{Bool}}$ stands for
a choice of left or right and the $\Id$ part represents the passed structure.
Lists of $(K_{\mathrm{Bool}} \times \Id)~\mathrm{BinTree}$ are exactly the sort of record we need to keep, i.e. they contain the same information as a tree with a single hole.
And indeed $K_{\mathrm{Bool}} \times \Id$ is (up to natural isomorphism) the derivative of $\mathtt{\term} \uplus (\Id
\times \Id)$!
\end{example}

Using derivatives we can define the context functor $H_X Y = \top \uplus
((\partial \Sigma~X) \times Y)$ for syntax functor $\Sigma$. Then the initial algebra
$C_A$ of $H_A$ is indeed $\mathrm{List}~((\partial \Sigma)~A)$, the set of
single-hole contexts for $A \cong \Sigma A$.

\paragraph{Plugging}
Before defining $\con$, we define an auxiliary function $\mathrm{conStep} : \partial \Sigma \times \Id \nat \Sigma$. We defer the reader
to~\cite{Mcbride01thederivative} for the full definition of $\mathrm{conStep}$,
which is inductive on the SPF $\Sigma$, and shall instead only define the case for coproducts. So, for $\partial (F
\uplus G) = \partial F \uplus \partial G$ we have:
\begin{align*}
  & \mathrm{conStep}_{F \uplus G} : (\partial F \uplus \partial G) \times \Id \nat F \uplus G \\
  & \mathrm{conStep}_{F \uplus G} \circ (\Inj_1 \times \id) = \Inj_1 \circ \mathrm{conStep}_F : \partial F \times \Id \nat F \uplus G \\
  & \mathrm{conStep}_{F \uplus G} \circ (\Inj_2 \times \id) = \Inj_2 \circ \mathrm{conStep}_G : \partial G \times \Id \nat F \uplus G  
\end{align*}
We may now define $\con : X \times H_X X \to X \uplus \Sigma X$ as follows:
  \begin{align*}
    & \con : \forall X . X \times (\term \uplus (\partial \Sigma~X \times X)) \to X \uplus \Sigma X \\
    & \con \circ (\id \times i_1) = i_1 \circ \pi_1 : \forall X . X \times \term \to X \uplus \Sigma X \\
    & \con \circ (\id \times i_2) = i_2 \circ \mathrm{conStep}_{\Sigma} \circ \pi_2 : \forall X . X \times (\partial \Sigma~X \times X) \to X \uplus \Sigma X
 \end{align*}
By setting $\hole = i_1 : \top \to \top \uplus (\partial \Sigma~X \times X)$ we can see that $\con \circ (\id \times \hole) = i_1 \circ \pi_1$
as required by~\Cref{ctxF}.

\STin{I bet \textbackslash Id is used on purpose. \AN{I dislike words put in mathmode just like that, but if you don't, then you can always redefine my commands to be just the words.}}

\subsection{Contextual coclosure}
\label{sec:proof}

Having established a categorical notion of contexts, we can now move
towards formulating contextual categorical arguments about bisimilarity. We
assume a context functor $H$ for $\Sigma$ such that $H_A$ has strong initial algebra
$(C_A, q_{C_A})$ (the object containing all contexts).

First, since we prefer to work in more general categories than just $\Set$, we
will encode relations $R \subseteq X \times Y$ as spans $X \xleftarrow{r_1} R
\xrightarrow{r_2} Y$. One may wish to consider only spans for which $(r_1, r_2)
: R \to X \times Y$ is a monomorphism, though this is not necessary for our
purposes.

We want to reason about contextually closed relations on the set of terms $A$,
which are relations such that $a_1 \mathrel{R} a_2$ implies $(c \plug{a_1})
\mathrel{R} (c \plug{a_2})$ for all contexts $c \in C_A$. Contextual equivalence
will typically be defined as the co-closure of equitermination: the greatest
contextually closed relation that implies equitermination. For spans, this
becomes:
\begin{definition}
	In a category as in \cref{ctxF},
	a span $A \xleftarrow{r_1} R \xrightarrow{r_2} A$ is called contextually closed if there is a morphism $\plug{~} : C_A \times R \to R$ making the following diagram commute:
	\begin{center}
		\begin{tikzcd}
			C_A \times A
				\arrow{d}{\plug{~}}
			& C_A \times R
				\arrow{l}[swap]{\id \times r_1}
				\arrow{r}{\id \times r_2}
				\arrow{d}{\plug{~}}
			& C_A \times A
				\arrow{d}{\plug{~}}
			\\
			A
			& R
				\arrow{l}[swap]{r_1}
				\arrow{r}{r_2}
			& A
		\end{tikzcd}
	\end{center}
	The contextual co-closure $A \xleftarrow{\bar r_1} \bar R \xrightarrow{\bar r_2} A$ of an arbitrary span $A \xleftarrow{r_1} R \xrightarrow{r_2} A$ is the final contextually closed span on $A$ with a span morphism $\bar R \to R$.
\end{definition}
We call terms bisimilar if the operational semantics $f : A \to Z$ assigns them
equal behaviors:
\begin{definition}
  \label{strbis}
	We define (strong) bisimilarity $\bisim$ as the pullback of the equality span $(\id_Z, \id_Z) : Z \to Z \times Z$ along $f \times f : A \times A \to Z \times Z$ (if existent).
\end{definition}

\begin{theorem}
  \label{ctxClosure}
  Under the assumptions of \ref{def:plug}, bisimilarity (if existent) is contextually closed.
\end{theorem}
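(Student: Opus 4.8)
The plan is to reduce the statement to a single factorization property of the interpretation map $f : A \to Z$ and then to invoke the universal property of the pullback that defines $\bisim$ (\cref{strbis}). The property I would isolate is that the behaviour of a plugged term depends on the pluggee only through its behaviour: there is a morphism $P : C_A \times Z \to Z$ with $f \circ \plug{~} = P \circ (\id_{C_A} \times f) : C_A \times A \to Z$ (I write $\plug{~}$ here in the product order of the contextual-closure diagram; the plugging map of \cref{def:plug} is the same up to the symmetry of $\times$). Granting this, write the legs of $\bisim$ as $A \xleftarrow{r_1} \bisim \xrightarrow{r_2} A$; by construction of the pullback $f \circ r_1 = f \circ r_2$. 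Then $f \circ \plug{~} \circ (\id \times r_1) = P \circ (\id \times f r_1) = P \circ (\id \times f r_2) = f \circ \plug{~} \circ (\id \times r_2)$, so the pair $\bigl(\plug{~} \circ (\id \times r_1),\ \plug{~} \circ (\id \times r_2)\bigr) : C_A \times \bisim \to A \times A$ becomes equal after postcomposition with $f \times f$ and hence factors uniquely through the pullback $\bisim$. The resulting morphism $C_A \times \bisim \to \bisim$ satisfies $r_i \circ (-) = \plug{~} \circ (\id \times r_i)$, which is exactly the diagram witnessing that $\bisim$ is contextually closed.

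For the factorization itself I would argue by the universal property of $C_A$ as the strong initial $H_A$-algebra, i.e.\ by uniqueness of strong inductive extensions (\cref{def:plug}). Two facts drive the argument: (i) $f$ is a $\Sigma$-algebra homomorphism, $f \circ q_A = g_{\lambda} \circ \Sigma f$, being the unique morphism from the initial to the final $\lambda$-bialgebra; and (ii) $\con$ is a natural transformation, so that $(f \uplus \Sigma f) \circ \con_A = \con_Z \circ (f \times H_f f)$, where $H_f f : H_A A \to H_Z Z$ denotes $H$ applied to $f$ in both of its arguments. Combining them, the algebra defining $\plug{~}$ computes as $f \circ [\id_A, q_A] \circ \con_A = [\id_Z, g_{\lambda}] \circ \con_Z \circ (f \times H_f f)$, and splitting $H_f f = (H_f \id_Z) \circ (H_A f)$ by bifunctoriality of $H$ exhibits the right-hand side as $\gamma \circ (\id_A \times H_A f)$, where $\gamma := [\id_Z, g_{\lambda}] \circ \con_Z \circ (f \times H_f \id_Z) : A \times H_A Z \to Z$. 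Thus the defining algebra of $\plug{~}$ factors through $\id \times H_A f$, and fold fusion identifies $f \circ \plug{~}$ with the strong inductive extension of $\gamma$. On the other hand $\gamma$ is exactly a reparametrisation of the ``denotational plugging'' algebra $[\id_Z, g_{\lambda}] \circ \con_Z \circ (\id_Z \times H_f \id_Z)$ (which replaces $q_A$ by $g_{\lambda}$ and interprets the context's term-variables through $f$ on the fly); taking $P$ to be its strong inductive extension, reparametrising the pluggee along $f$ shows that $P \circ (\id \times f)$ is likewise the strong inductive extension of $\gamma$. Uniqueness then yields $f \circ \plug{~} = P \circ (\id \times f)$. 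A pleasant feature is that this keeps every context in $C_A$ and never requires $H_Z$ to possess its own initial algebra.

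The main obstacle I anticipate is the bookkeeping of the strengths hidden inside the strong inductive extensions. Both the fold-fusion step (that $f \circ \plug{~}$ is the extension of $\gamma$) and the reparametrisation step (that $P \circ (\id \times f)$ is the extension of $\gamma$) require matching the naturality of the strength and of $\con$ against the bifunctorial action of $H$ in each argument separately; a misalignment between $H_A f$ and $H_f \id_Z$ is exactly where the diagram chase could silently fail, so these routine-but-delicate verifications are where I would spend the most care. Everything else — the reduction to the pullback and the appeal to $f$ being a bialgebra morphism — is purely formal.
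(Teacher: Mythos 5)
Your proposal is correct and rests on the same machinery as the paper's own proof: the reduction to the universal property of the pullback defining $\bisim$, uniqueness of strong inductive extensions out of $C_A$, the fact that $f$ is a $\Sigma$-algebra homomorphism (the algebra half of the unique bialgebra morphism from $A$ to $Z$), and naturality of $\con$ and of the strength. The only genuine difference is organizational: the paper never constructs your denotational plugging map $P$. Instead it precomposes with $r_i \times \id$ and shows, in a single diagram, that $f \circ \plug{~} \circ (r_i \times \id)$ is for either $i$ \emph{the} strong inductive extension of $[\id_Z, q_Z] \circ \con_Z \circ (w \times H_f \id_Z) : \bisimspan \times H_A Z \to Z$, where $w = f \circ r_1 = f \circ r_2$ --- which is exactly your $\gamma$ reparametrised along $r_i$ rather than along $f$. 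Your route buys a reusable intermediate statement (plugging descends along $f$ to behaviours, with no need for an initial $H_Z$-algebra) at the cost of one further appeal to strong initiality with parameter object $Z$ and a second fusion step; the paper's route proves only what the theorem needs but does it in one chase. The delicate points you flag --- aligning the strength and $\con$ against $H_A f$ versus $H_f \id_Z$ --- are precisely the inner polygons of the paper's diagram, and they do all commute, so both your fusion step and your reparametrisation step go through.
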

\begin{proof}
	We need to give a morphism of spans from $C_A \times {\bisimspan}$ to $\bisimspan$:
	\begin{center}
		\begin{tikzcd}
			C_A \times A
				\arrow{d}{\plug{~}}
			& C_A \times \bisimspan
				\arrow{l}[swap]{\id \times r_1}
				\arrow{r}{\id \times r_2}
				\arrow[dotted]{d}
			& C_A \times A
				\arrow{d}{\plug{~}}
			\\
			A
				\arrow{d}{f}
			& \bisimspan
				\arrow{l}[swap]{r_1}
				\arrow{r}{r_2}
				\arrow{d}{w}
			& A
				\arrow{d}{f}
			\\
			Z
			& Z
				 \arrow{l}[swap]{\id_Z}
				 \arrow{r}{\id_Z}
			& Z.
		\end{tikzcd}
	\end{center}
	By definition of $\bisimspan$, it suffices to give a morphism of spans to the
  equality span on $Z$, i.e. to prove that $f \circ \plug{~} \circ (\id \times
  r_1) = f \circ \plug{~} \circ (\id \times r_2)$. To this end, consider the
  following diagram (parameterized by $i \in \{1, 2\}$), in which every polygon
  is easily seen to commute:
  \begin{center}
    \adjustbox{scale=0.9,center}{
		\begin{tikzcd}
			\bisimspan \times H_AC_A
			    \arrow{rrrr}{\id \times q_{C_A}}[swap]{\cong}
				\arrow[bend right=15]{rrd}[swap]{r_i \times \id}
				\arrow{dd}[swap]{(\Proj_1, H_A (r_i \times \id) \circ \strength)}
			&&&& \bisimspan \times C_A \arrow{d}{r_i \times \id}
			\\
			&& A \times H_AC_A
				\arrow{rr}{\id \times q_{C_A}}[swap]{\cong}
				\arrow{d}[swap]{(\Proj_1 , \strength)}
			&& A \times C_A \arrow{dd}{\plug{~}}
			\\
			\bisimspan \times H_A(A \times C_A) \arrow{rr}[swap]{r_i \times \id}
				\arrow{d}[swap]{\id \times  H_A \plug{~}}
			&& A \times H_A(A \times C_A)
				\arrow{d}[swap]{\id \times  H_A \plug{~}}
			\\
			\bisimspan \times H_AA \arrow{rr}[swap]{r_i \times \id}
				\arrow{d}[swap]{\id \times  H_A f}
			&& A \times H_AA \arrow{r}[swap]{\con_A}
				\arrow{dd}[swap]{f \times  H_f f}
			& A \uplus \Sigma A \arrow{r}[swap]{[\id, q_ A]}
				\arrow{dd}[swap]{f \uplus  \Sigma f}
			& A \arrow{dd}{f}
			\\
			\bisimspan \times H_AZ
				\arrow{d}[swap]{\id \times  H_f \id_Z}
			\\
			\bisimspan \times H_ZZ \arrow{rr}[swap]{w \times \id}
			&& Z \times H_ZZ \arrow{r}[swap]{\con_Z}
			& Z \uplus \Sigma Z \arrow{r}[swap]{[\id, q_ Z]}
			& Z
		\end{tikzcd}}
\end{center}
The bottom-right square stems from the underlying GSOS law: it is the algebra
homomorphism part of the bialgebra morphism between the initial and the final
bialgebras. Commutativity of the outer diagram reveals that $f \circ \plug{~}
\circ (r_i \times \id)$ is, regardless of $i$, \emph{the} strong inductive
extension of $[\id_Z, q_Z] \circ \con_Z \circ (w \times H_f \id_Z) : \bisimspan \times H_AZ \to Z$. Thus, it is independent of
$i$. \qed
\end{proof}

\begin{corollary}
  \label{maincor}
  In $\set$, bisimilarity is its own contextual coclosure: $a \bisim b <=> \forall~c \in C_{A}.~c\plug{a} \bisim c\plug{b}$
\end{corollary}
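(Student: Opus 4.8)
The plan is to prove the biconditional directly, in two independent directions, and then observe that it expresses exactly that $\bisim$ coincides with its own contextual coclosure. The forward implication, $a \bisim b \Rightarrow \forall c \in C_A.\ c\plug{a} \bisim c\plug{b}$, requires nothing new: it is precisely the assertion that $\bisim$ is contextually closed, which is \Cref{ctxClosure}. Read elementwise in $\Set$, the span morphism $\plug{~} : C_A \times \bisimspan \to \bisimspan$ produced there says exactly that $(c\plug{a}, c\plug{b})$ again lies in $\bisim$ whenever $(a,b)$ does.

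For the backward implication I would exhibit a single, canonical distinguishing context, namely the hole itself. Let $\mathsf{h} = q_{C_A} \circ \hole_{(A,C_A)} : \term \to C_A$ be the bare hole, obtained by feeding the point $\hole_{(A,C_A)} : \term \to H_A C_A$ through the (iso) initial-algebra structure $q_{C_A}$ of $H_A$. The crux is a lemma stating that plugging into $\mathsf{h}$ is the identity, i.e. $\mathsf{h}\plug{a} = a$ for every $a \in A$. Granting this, instantiating the right-hand side of the corollary at $c = \mathsf{h}$ gives $a = \mathsf{h}\plug{a} \bisim \mathsf{h}\plug{b} = b$, which is exactly $a \bisim b$.

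The only genuine work, and the step I expect to be the main obstacle, is this identity-of-the-hole lemma. I would prove it by chasing the defining diagram of $\plug{~}$ from \Cref{def:plug} on the input $(a, \mathsf{h}) = (\id \times q_{C_A})(a, \hole_{(A,C_A)})$. Each stage of the chase pushes the hole one functor layer outward: the step $(\Proj_1, \strength)$ replaces $\hole_{(A,C_A)}$ by $\hole_{(A, A \times C_A)}$, using the canonical $\Set$-strength $\strength(a,u) = H_A(\lambda y.(a,y))(u)$ together with naturality of $\hole$ in its second argument; the recursive step $\id \times H_A\plug{~}$ then replaces it by $\hole_{(A,A)}$, again by naturality of $\hole$, now along $\plug{~} : A \times C_A \to A$; and finally the context-functor axiom of \Cref{ctxF}, namely $\con_A \circ (\id \times \hole_{(A,A)}) = \Inj_1 \circ \Proj_1$, turns $\con_A(a, \hole_{(A,A)})$ into $\Inj_1\,a$, whence $[\id, q_A]$ returns $a$. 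No single step is deep, but coordinating the repeated naturality of $\hole$ with the strength coherence is where care is required.

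Finally, I would remark that the biconditional is precisely the claim that $\bisim$ is its own coclosure. Writing $R'$ for the relation defined by the right-hand side, the two directions just proved give $R' = \bisim$, while $R'$ is readily seen to be the contextual coclosure of $\bisim$: it refines $\bisim$ (by the hole context), it is contextually closed (since $c\plug{(d\plug{-})}$ is itself plugging into a composite context, so $d$-closure follows from $R'$'s definition), and it is maximal among contextually closed relations refining $\bisim$ (any such $S$ satisfies $a \mathrel{S} b \Rightarrow c\plug{a} \mathrel{S} c\plug{b} \Rightarrow c\plug{a} \bisim c\plug{b}$, hence $S \subseteq R'$). Thus $\overline{\bisim} = R' = \bisim$.
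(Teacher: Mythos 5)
Your proposal is correct and is essentially the argument the paper intends: the paper offers no explicit proof of \Cref{maincor}, treating it as immediate from \Cref{ctxClosure} (the forward direction) together with the observation that the bare hole context plugs to the identity (the backward direction), which is exactly what you spell out, including the diagram chase through $(\Proj_1,\strength)$, naturality of $\hole$, and the axiom $\con_A \circ (\id \times \hole) = \Inj_1 \circ \Proj_1$. One small caveat on your final paragraph: the claim that $c\plug{d\plug{-}}$ is plugging into a composite context is never established in the paper and would need its own argument; it is cleaner (and sufficient) to note that contextual closedness of $R'$ follows directly from \Cref{ctxClosure} once you have shown $R' = {\bisim}$, so no context composition is needed.
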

\begin{corollary} \label{thm:bisim-ctxeq}
	In $\set$, bisimilarity implies contextual equivalence.\footnote{Note that we
    can \emph{not} conclude that preservation of bisimilarity would imply preservation of contextual equivalence.}
\end{corollary}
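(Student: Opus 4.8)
The plan is to realise contextual equivalence as the contextual co-closure of equitermination — the greatest contextually closed span on $A$ refining equitermination — and to exhibit bisimilarity $\bisim$ as a relation that lies inside it. Equivalently, I can argue straight from \Cref{maincor}: once I know that $a \bisim b$ forces $c\plug{a} \bisim c\plug{b}$ for every context $c \in C_A$, it remains only to check that bisimilar terms equiterminate, from which $c\plug{a}\Downarrow \iff c\plug{b}\Downarrow$ for all $c$ follows, and this is exactly $a \ctxeq b$.

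Concretely I would proceed in three steps. First, record that bisimilarity is contextually closed: this is precisely \Cref{ctxClosure}, so nothing new is needed. Second, establish the key lemma that equitermination is a coarsening of $\bisim$, i.e.\ that the termination predicate $(\_)\Downarrow$ on $A$ factors through the behavior map $f : A \to Z$. This holds because whether a program eventually terminates is determined by its cofree behavior in $Z \cong S \to S \times \mathrm{Maybe}~Z$: termination corresponds to the behavior reaching $\mathrm{nothing}$, which is a property of the single element $f(p) \in Z$. Since $a \bisim b$ means $f(a) = f(b)$ by \Cref{strbis}, bisimilar terms necessarily agree on this predicate and hence equiterminate. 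Third, combine the two ingredients: \Cref{maincor} gives $c\plug{a} \bisim c\plug{b}$ for all $c \in C_A$, and the lemma turns each such bisimilarity into equitermination of $c\plug{a}$ and $c\plug{b}$, yielding $a \ctxeq b$. (Alternatively, one invokes maximality of the co-closure directly: $\bisim$ is contextually closed and refines equitermination, so it is contained in the greatest such relation, namely $\ctxeq$.)

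The main obstacle I anticipate is the second step. Steps one and three are either cited or purely formal manipulations of spans and \Cref{maincor}, but the genuine semantic content lives in verifying that equitermination is a bisimulation-invariant property, i.e.\ that it factors through $f$. In the abstract categorical setting this amounts to positing a termination predicate on the behavior object $Z$ through which $(\_)\Downarrow$ is recovered; for the concrete $\while$-style behaviors it is immediate from the shape of $Z$, but stating the hypothesis cleanly is where the care is required. I would also flag, as the excerpt's footnote already warns, that this argument is one-directional and does not license any claim that a bisimilarity-preserving compiler preserves contextual equivalence.
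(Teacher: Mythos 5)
Your proposal is correct and follows essentially the same route as the paper: the paper's proof is exactly ``bisimilarity implies equitermination, hence an implication between their contextual coclosures,'' which is the maximality argument you give in your parenthetical alternative, and your more explicit version via \Cref{maincor} is the same argument unfolded. Your elaboration of why termination factors through $f : A \to Z$ is a reasonable filling-in of the step the paper leaves implicit.
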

\begin{proof}
	Bisimilarity implies equitermination. This yields an implication between their coclosures. \qed
\end{proof}

Comparing~\Cref{maincor} to contextual equivalence in~\Cref{ctxeq} reveals their key
difference. Contextual equivalence makes minimal assumptions on the underlying
observables, which are simply divergence and termination. On the other hand, the
contextual coclosure of bisimilarity assumes maximum observability (as dictated
by the behavior functor) and in that sense it represents the upper bound of what can be observed through contexts. Consequently, this criterion is useful if the
observables adequately capture the threat model, which is true for the examples
that follow.

This theorem echoes similar results in the broader study of coalgebraic
bisimulation~\cite{Bartels04ongeneralised, DBLP:journals/mscs/RotBBPRS17}. There
are, however, two differences. The first is that our theorem allows for extra
flexibility in the definition of contexts as the theorem is parametric on the
context functor. Second, by making the context construction explicit we can
directly connect (the contextual coclosure of) bisimilarity to contextual
equivalence (\cref{thm:bisim-ctxeq}) and so have a more convincing argument for using maps of
distributive laws as secure compilers.

\section{An extra register (Part II)}
\label{sec:ex1}

The next step is to define the syntax and behavior natural transformations. The
first compiler, $\sigma_{\flagsym} : \Sigma \nat \Sigma_{\flagsym}$, is a very simple 
mapping of constructors in \while{} to their \whilep{} counterparts. The second
natural transformation, $\sigma_{\secflagsym} : \Sigma \nat \Sigma_{\secflagsym}^{*}$, is more complex
as it involves an additional layer of syntax in \whiles{}.

\begin{definition}[Sandboxing natural transformation]
  \label{sandbox}
  Consider the natural transformation $e : \Sigma \nat \Sigma_{\secflagsym}$ which embeds
  $\Sigma$ in $\Sigma_{\secflagsym}$.
	Using PL notation, we define $\sigma_{\secflagsym} : \Sigma X \to \Sigma^{*}_{\secflagsym} X : p \mapsto \lbag e(p) \rbag$.
	This yields a monad morphism $\sigma_{\secflagsym}^{*} : \Sigma^{*} \to \Sigma_{\secflagsym}^{*}$ (\cref{rem:monad}).
\end{definition}

Defining the natural translations between behaviors is a matter of choosing a
designated value for the added observable label. The only constraint is that the
chosen value has to coincide with the label that the sandbox produces. $B_{\flagsym}$ and
$B_{\secflagsym}$ are identical so we need a single natural transformation $b : B \nat B_{\flagsym/\secflagsym}$:
  \begin{align*}
    & b : \forall X.~(S -> S \times \mathrm{Maybe}~X) -> S -> \mathbb{N} \times S \times \mathrm{Maybe}~X \\
    & b~f = \lambda (s : S) -> (0 , f(s))
  \end{align*}

\subsubsection{\while{} to \whilep{}}

We now have the natural translation pairs $(\sigma_{\flagsym} , b)$ and $(\sigma_{\secflagsym} , b)$, which
allows us to check the coherence criterion from~\Cref{sec:ctxeq}.
We shall be using a graphical notation that provides for a good
intuition as to what failure or success of the criterion \emph{means}. For
example, \cref{fig:coh1} shows failure of the coherence criterion for the first pair.

\begin{wrapfigure}{r}{0.52\textwidth}
  \vspace*{-0.3cm}
\begin{tikzpicture}[node distance=2cm,thick,scale=0.9, every node/.style={scale=0.9}]
  \node (A) at (0, 0) {$l~\texttt{:=}~e$};
  \node (B) at (4, 0) {\inference{v = \texttt{eval}~s~e}{\rets{s}{\texttt{update}~s~l~v}}};
  \node (C) at (0, -1.6) {$l~\texttt{:=}~e$};
  \node (D) at (4, -1.6) {\inference{v = \texttt{eval}~s~e}{\retsv{s}{\texttt{update}~s~l~v}{\textcolor{red}{v}/\textcolor{blue}{0}}}};

\draw[->] (A) edge node[above] {$\rho^{*}$} (B);
\draw[->] (A) edge node[right,black] {$\sigma^{*}_{\flagsym} \circ \Sigma^{*} b^{c}$} (C);
\draw[->][blue] (B) edge node[right,black] {$b^{c} \circ B^{c} \sigma^{*}_{\flagsym}$} (D);
\draw[->][red] (C) edge node[above,black] {$\rho^{*}_{\flagsym}$} (D);

\end{tikzpicture}
\vspace*{-0.15cm}
\caption{Failure of the criterion for $(\sigma_{\flagsym} , b)$.}
\label{fig:coh1}
\end{wrapfigure}

The horizontal arrows in the diagram represent the two semantics, $\rho^{*}$ and
$\rho^{*}_{\flagsym}$, while the vertical arrows are the two horizontal compositions of the
natural translation pair. The top-left node holds an element of
$\Sigma^{*} (\Id \times B)$, which in this case is an assignment operation. The two
rightmost nodes represent behaviors, so the syntactic element is missing from
the left side of the transition arrows.

In the upper path, the term is first applied to the GSOS law $\rho^{*}$ and the result is
then passed to the translation pair, thus producing the designated label $0$,
typeset in blue for convenience. In the lower path, the term is first applied to
the translation and then goes through the target semantics, $\rho^{*}_{\flagsym}$, where
the label $v$ is produced. It is easy to find such an $s$ so that
$\textcolor{red}{v} \neq \textcolor{blue}{0}$.

\subsubsection{\while{} to \whiles{}}

\begin{wrapfigure}{r}{0.55\textwidth}
\vspace*{-0.3cm}
\begin{tikzpicture}[node distance=2cm,thick,scale=0.9, every node/.style={scale=0.9}]
  \node (A) at (0, 0) {$l~\texttt{:=}~e$};
  \node (B) at (4, 0) {\inference{v = \texttt{eval}~s~e}{\rets{s}{\texttt{update}~s~l~v}}};
  \node (C) at (0, -1.9) {$\lbag l~\texttt{:=}~e \rbag$};
  \node (D) at (4, -1.9) {\inference{\inference[]{v = \texttt{eval}~s~e}{\textcolor{red}{\retsvr{s, l~\texttt{:=}~e}{\texttt{update}~s~l~v}{v}}}}{\retsv{s}{\texttt{update}~s~l~v}{\textcolor{purple}{0}}}};

\draw[->] (A) edge node[below] {$\rho^{*}$} (B);
\draw[->] (A) edge node[right,black] {$\sigma^{*}_{\secflagsym} \circ \Sigma^{*} b^{c}$} (C);
\draw[->][blue] (B) edge node[right,black] {$b^{c} \circ B^{c} \sigma^{*}_{\secflagsym}$} (D);
\draw[->][red] (C) edge node[above,black] {$\rho_{\secflagsym}^{*}$} (D);

\end{tikzpicture}
\vspace*{-0.2cm}
\caption{The coherence criterion for $(\sigma_{\secflagsym} , b)$.}
\label{fig:coh2}
\end{wrapfigure}

The same example is investigated for the second translation pair $(\sigma_{\secflagsym}
, b)$. \Cref{fig:coh2} shows what happens when we test the same case as before.
Applying $\rho_{\secflagsym}^{*}$ to $\lbag l~\texttt{:=}~e \rbag$ is similar to
$\rho_{\secflagsym}$ acting twice. The innermost transition is the intermediate step and
as it only appears in the bottom path it is typeset in red. This time the
diagram commutes as the label produced in the inner layer, $\textcolor{red}{v}$,
is effectively erased by the sandboxing rules of \whiles{}.

\subsubsection{An endo-compiler for \whiles{}}
If $A_{\secflagsym}$ is the set of closed terms for \whiles{}, the compiler $u : A_{\secflagsym} ->
A_{\secflagsym}$, which ``escapes'' \whiles{} terms from their sandboxes can be
elegantly modeled using category theory. As before, it is not possible to
express it using a simple natural transformation $\Sigma_{\secflagsym} \nat \Sigma_{\secflagsym}$. We
can, however, use the \emph{free pointed
  endofunctor}~\cite{lenisa-power-watanabe} over $\Sigma_{\secflagsym}$,  $\Id 
\uplus \Sigma_{\secflagsym}$. What we want is to map non-sandboxed
terms to themselves and lift the extra layer of syntax from sandboxed terms.
Intuitively, for a set of variables $X$, $\Sigma_{\secflagsym} X$ is one layer of syntax
``populated'' with elements of $X$. If $X \uplus \Sigma_{\secflagsym} 
X$ is the union of $\Sigma_{\secflagsym} X$ with the set of variables $X$, lifting the
sandboxing layer is mapping the $X$ in $\lbag X \rbag$ to the left of $X
\uplus \Sigma_{\secflagsym} X$ and the rest to themselves at the right.

\begin{wrapfigure}{r}{0.5\textwidth}
\begin{tikzpicture}[node distance=2cm,thick,scale=0.9, every node/.style={scale=0.9}]
  \node (A) at (0, 0) {\inference{\goesv{s,p}{s',q}{v}}{\lbag p \rbag}};
  \node (B) at (4, 0) {\goesv{s}{s', \lbag q \rbag}{0}};
  \node (C) at (0, -1.6) {\inference{\goesv{s,p}{s',q}{v}}{p}};
  \node (D) at (4, -1.6) {\goesv{s}{s', q}{\textcolor{red}{v}/\textcolor{blue}{0}}};

\draw[->] (A) edge node[below] {$\rho_{\secflagsym}^{*}$} (B);
\draw[->] (A) edge node[left,black] {$\sigma^{*}_{u}$} (C);
\draw[->][blue] (B) edge node[right,black] {$B^{c} \sigma^{*}_{u}$} (D);
\draw[->][red] (C) edge node[above,black] {$\rho_{\secflagsym}^{*}$} (D);

\end{tikzpicture}
\vspace*{-0.2cm}
\caption{Failure of the criterion for $\sigma_{u}$.}
\vspace*{-0.5cm}
\label{fig:coh3}
\end{wrapfigure}

This is obviously not a secure compiler as it allows discerning previously
indistinguishable programs. As we can see in~\Cref{fig:coh3}, the coherence
criterion fails in the expected manner.

\section{State mismatch}
\label{sec:state}

Having established our categorical foundations, we shall henceforth focus on
examples. The first one involves a compiler where the target machine is not
necessarily more powerful than the source machine, but the target \emph{value}
primitives are not isomorphic to the ones used in the source. This is a
well-documented problem~\cite{DBLP:journals/toplas/PatrignaniAS0CP15}, which has
led to failure of full abstraction before~\cite{DBLP:conf/csfw/AgtenSJP12,
  DBLP:conf/popl/FournetSCDSL13, DBLP:journals/tcs/Kennedy06}.

For example, we can repeat the development of \while{} except we substitute
natural numbers with integers. We call this new version \whilez{}.
\begin{grammar}
  <expr> ::= \texttt{lit} $\mathbb{Z}$ | \texttt{var} $\mathbb{N}$ | <expr>
  <bin> <expr> | <un> <expr>
\end{grammar}
The behavior functor also differs in that the store type $S$ is substituted with
$S_{\intsym}$, the set of lists of integers. We can define the behavioral
natural transformation $b_\intsym : B \nat B_\intsym$ as the best ``approximation'' between the two
behaviors. In $\set$:
  \begin{align*}
    & b_\intsym : \forall X.~(S -> S \times (\term \uplus X)) -> S_\intsym -> S_\intsym \times (\term \uplus X) \\
    & b_\intsym~f = [\text{to}\mathbb{Z} , \id] \circ f \circ \text{to}\mathbb{N}
  \end{align*}

  \BeforeBeginEnvironment{wrapfigure}{\setlength{\intextsep}{2pt}}
  \begin{wrapfigure}{r}{0.5\textwidth}
    \begin{tikzpicture}[node distance=2cm,thick,scale=0.9, every node/.style={scale=0.9}]
      \node (A) at (0, 0) {$\texttt{0 := min(var[0],0)}$};
      \node (B) at (4, 0) {\rets{{[n]}}{{[0]}}};
      \node (C) at (0, -1.4) {$\texttt{0 := min(var[0],0)}$};
      \node (D) at (4, -1.4) {\rets{{[-1]}}{{[\textcolor{red}{-1}/\textcolor{blue}{0}]}}};

      \draw[->] (A) edge node[below] {$\rho^{*}$} (B);
      \draw[->] (A) edge node[left,black] {$\Sigma^{*} b^{c}_{\intsym}$} (C);
      \draw[->][blue] (B) edge node[left,black] {$b^{c}_{\intsym}$} (D);
      \draw[->][red] (C) edge node[above,black] {$\rho^{*}_{\intsym}$} (D);
    \end{tikzpicture}
    \caption{Failure of the criterion for $(\id , b_{\intsym})$.}
    \label{fig:coh4}
  \end{wrapfigure}
  \BeforeBeginEnvironment{wrapfigure}{\setlength{\intextsep}{0pt}}

Where $\text{to}\mathbb{N}$ replaces all negative numbers in the store with 0 and $\text{to}\mathbb{Z}$ typecasts $S$
to $S_{\intsym}$. It is easy to see that the identity compiler from \while{} to
\whilez{} is not fully abstract. For example, the expressions $0$ and $\texttt{min}(\texttt{var}[0],0)$ are identical in
\while{} but can be distinguished in \whilez{} (if $\mathtt{var}[0]$ is negative). This is reflected in the
coherence criterion diagram for the identity compiler in~\Cref{fig:coh4}, when
initiating the store with a negative integer.

\begin{wrapfigure}{r}{0.5\textwidth}
\begin{tikzpicture}[node distance=2cm,thick,scale=0.9, every node/.style={scale=0.9}]
  \node (A) at (0, 0) {$\texttt{0 := min(var[0],0)}$};
  \node (B) at (4, 0) {\rets{{[n]}}{{[0]}}};
  \node (C) at (0, -1.6) {$\langle \texttt{0 := min(var[0],0)} \rangle$};
  \node (D) at (4, -1.6) {\rets{{[-1]}}{{[\textcolor{purple}{0}]}}};

\draw[->] (A) edge node[below] {$\rho^{*}$} (B);
\draw[->] (A) edge node[left,black] {$\sigma^{*}_{\intsym} \circ \Sigma^{*} b^{c}_{\intsym}$} (C);
\draw[->][blue] (B) edge node[left,black] {$b^{c}_{\intsym} \circ B^{c}\sigma^{*}_{\intsym}$} (D);
\draw[->][red] (C) edge node[above,black] {$\rho^{*}_{\intsym}$} (D);

\end{tikzpicture}
\caption{The coherence criterion for $(\sigma_{\intsym}, b_{\intsym})$.}
\label{fig:coh5}
\end{wrapfigure}

The solution is to create a special environment where \whilez{} forgets about
negative integers, in essence copying what $b_{\intsym}$ does on the variable store.
This is a special kind of sandbox, written $\langle\_\rangle$, for which we
introduce the following rules:
\begin{align*}
  & \inference{\rets{\text{to}\mathbb{N}(s), p}{s\pr}}{\rets{s, \langle p \rangle}{s\pr}} \qquad
  \inference{\goes{\text{to}\mathbb{N}(s), p}{s\pr, p\pr}}{\goes{s, \langle p \rangle}{s\pr , \langle p\pr \rangle}}
\end{align*}
We may now repeat the construction from~\Cref{sandbox} to define the compiler
$\sigma_{\intsym}$. We can easily verify that the pair $(\sigma_{\intsym}, b_{\intsym})$
constitutes a map of distributive laws. For instance, \Cref{fig:coh5}
demonstrates how the previous failing case now works under $(\sigma_{\intsym}, b_{\intsym})$.

\section{Control Flow}
\label{sec:control}

Many low-level languages support unrestricted control flow in the
form of jumping or branching to an address. On the other hand, control flow
in high-level languages is usually restricted (think if-statements or function
calls). A compiler from the high-level to the low-level might be insecure as it
exposes source-level programs to illicit control flow. This is another
important and well-documented example of failure of full
abstraction~\cite{DBLP:conf/csfw/AgtenSJP12,
  DBLP:journals/toplas/PatrignaniAS0CP15,
  DBLP:conf/ccs/AbateABEFHLPST18,Patrignani2019FormalAT}.

\begin{figure}
  \begin{framed}
    \framegather
    \begin{gather*}
      \inference{}{\rets{s,0,\texttt{stop}~[\texttt{;;}~x]}{s,0}} \quad
      \inference{v = \texttt{eval}~s~e & s\pr =
        \texttt{update}~s~n~v}{\goes{s,0,\texttt{assign}~n~v~[\texttt{;;}~x]}{s\pr,1,\texttt{assign}~n~v~[\texttt{;;}~x]}}
      \\
      \inference{\texttt{PC} \geq 0 & \rets{s,\texttt{PC},x}{s\pr,\texttt{PC}\pr}}{\rets{s,\texttt{PC}+1,i~
          \texttt{;;}~x}{s\pr,\texttt{PC}\pr + 1}} \quad
      \inference{v = \texttt{eval}~s~e & v =
        0}{\goes{s,0,\texttt{br}~e~z~[\texttt{;;}~x]}{s,1,\texttt{br}~e~z~[\texttt{;;}~x]}}
      \\
      \inference{v = \texttt{eval}~s~e & v \neq
        0}{\goes{s,0,\texttt{br}~e~z~[\texttt{;;}~x]}{s,z,\texttt{br}~e~z~[\texttt{;;}~x]}} \quad
      \inference{\texttt{PC} < 0}{\rets{s,\texttt{PC},i~
          \texttt{;;}~x}{s,\texttt{PC}}}
      \\
      \inference{p = \texttt{nop}~[\texttt{;;}~x]}{\goes{s,0,p}{s,1,p}} \quad
      \inference{\texttt{PC} \geq 0 & \goes{s,\texttt{PC},x}{s\pr,\texttt{PC}\pr,x\pr}}{\goes{s,\texttt{PC}+1,i~
          \texttt{;;}~x}{s\pr,\texttt{PC}\pr + 1,i~\texttt{;;}~x\pr}} \quad
      \inference{\texttt{PC} \neq 0}{\rets{s,\texttt{PC},i}{s,\texttt{PC}}}
    \end{gather*}
    \caption{Semantics of the \low{} language. Elements in square brackets are
      optional.}
    \label{fig:low}
  \end{framed}
\end{figure}

We introduce low-level language \low{}, the programs of which are non-empty lists of
instructions. \low{} differs significantly from \while{} and its derivatives in
both syntax and semantics. For the syntax, we define the set of instructions
$\langle inst \rangle$ and set of programs $\langle asm \rangle$.
\begin{grammar}
  <inst> ::= \texttt{nop} | \texttt{stop} | \texttt{assign} $\mathbb{N}$ <expr>  |
  \texttt{br} <expr> $\mathbb{Z}$

  <asm> ::= <inst> | <inst> \texttt{;;} <asm>
\end{grammar}

Instruction \texttt{nop} is the no-operation, \texttt{stop} halts
execution and \texttt{assign} is analogous to the assignment operation in
\while. The \texttt{br} instruction is what really defines \low{}, as it stands
for bidirectional relative branching.

\subsubsection{(Bialgebraic) Semantics for \low}

\Cref{fig:low} shows the operational semantics of \low{}. The execution state of
a running program consists of a run-time store and the program counter register
$\texttt{PC} \in \mathbb{Z}$ that points at the instruction being processed. If the program
counter is zero, the leftmost instruction is executed. If the program counter is
greater than zero, then the current instruction is further to the right.
Otherwise, the program counter is out-of-bounds and execution stops.
The categorical interpretation suggests a GSOS law $\rho_{L}$ of syntax
functor $\Sigma_{L} X = \text{inst} \uplus (\text{inst} \times
X)$ over behavior functor $B_{L} X = S \times \mathbb{Z} -> S \times \mathbb{Z}
\times \mathrm{Maybe}~X$.

\subsubsection{An insecure compiler}
\vspace*{-0.15cm}

This time we start with the behavioral translation, which is less obvious as we
have to go from $BX = S -> S \times \mathrm{Maybe}~X$ to 
$B_{L}X = S \times \mathbb{Z} -> S \times \mathbb{Z} \times \mathrm{Maybe}~X$. The increased arity in $B_{L}$ poses an interesting question as to what
the program counter should mean in \while{}. It makes sense to consider the program counter
in \while{} as zero since a program in \while{} is treated uniformly as a single
statement.
  \begin{align*}
    & b_{L} : \forall X.~(S -> S \times \mathrm{Maybe}~X) -> S \times \mathbb{Z} -> S \times \mathbb{Z} \times \mathrm{Maybe}~X \\
      & b_{L}~f~(s, 0) =
        \begin{cases}
        (s\pr, 1, \mathrm{nothing}) & \text{if}~f~s = (s\pr, \mathrm{nothing}) \\
        (s\pr, 0, \mathrm{just}~y) & \text{if}~f~s = (s\pr, \mathrm{just}~y)
      \end{cases}
                  \\
      & b_{L}~f~(s, n \neq 0) = (s, n, \mathrm{nothing})
  \end{align*}
When it comes to translating terms, a typical compiler from \while{} to \low{}
would untangle the tree-like structure of \while{} and convert it to a list of
\low{} instructions. For \texttt{while} statements, the compiler would use
branching to simulate looping in the low-level.
\begin{example}
  Let us look at a simple case of a loop. The \while{} program \\
    \mbox{\texttt{while} (\texttt{var} 0 \texttt{<} 2) (1 \texttt{:=}
      \texttt{var} 1 \texttt{+} 1)} is compiled to \\
    \mbox{\texttt{br} \texttt{!}(\texttt{var} 0 \texttt{<} 2) 3 \texttt{;;}
    \texttt{assign} 1 (\texttt{var} 1 \texttt{+} 1) \texttt{;;}
    \texttt{br} (\texttt{lit} 1) -2}
  \label{ex1}
\end{example}

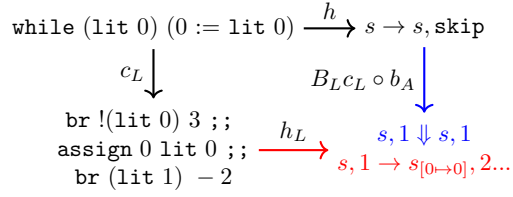
\begin{wrapfigure}{r}{0.57\textwidth}
\begin{tikzpicture}[node distance=2cm,thick,scale=0.9, every node/.style={scale=0.9}]
  \node (A) at (0, 0) {\texttt{while}~(\texttt{lit} 0)~(0 := \texttt{lit} 0)};
  \node (B) at (4, 0) {\goes{s}{s, \texttt{skip}}};
  \node (C) at (0, -1.8) {$\begin{array}{@{}c@{}}\texttt{br}~!(\texttt{lit}~0)~3~\texttt{;;}~\\\texttt{assign}~0~\texttt{lit}~0~\texttt{;;}\\\texttt{br}~(\texttt{lit}~1)~-2 \end{array}$};
    \node (D) at (4, -1.8) {$\begin{array}{@{}c@{}}\textcolor{blue}{\rets{s,1}{s,1}} \\ \textcolor{red}{\goes{s,1}{s_{[0 \mapsto 0]},2...}} \end{array}$};

\draw[->] (A) edge node[above] {$h$} (B);
\draw[->] (A) edge node[left,black] {$c_{L}$} (C);
\draw[->][blue] (B) edge node[left,black] {$B_{L}c_{L} \circ b_{A}$} (D);
\draw[->][red] (C) edge node[above,black] {$h_{L}$} (D);
\end{tikzpicture}
\caption{$c_{L}$ is not a coalgebra homomorphism.}
\label{fig:cohLow}
\end{wrapfigure}

This compiler, called $c_{L}$, cannot be defined in terms of a natural
transformation $\Sigma \nat \Sigma^{*}_{L}$ as per~\Cref{rem:fr}, but it is
inductive on the terms of the source language. In this case we can directly
compare the two operational models $b_{A} \circ h : A -> B_{L}A$ (where $h : A \to BA$) and $h_{L} : A_{L}
-> B_{L} A_{L}$ and notice that $c_{L} : A -> A_{L}$ is not a coalgebra
homomorphism (\Cref{fig:cohLow}). The key is that the program counter in \low{}
allows for finer observations on programs. Take for 
example the case for $\texttt{while}~(\texttt{lit}~0)~(0 := \texttt{lit}~0)$, where the loop is
always skipped. In \low{}, we can still access the loop body by simply
pointing the program counter to it. This is a realistic attack scenario because
\low{} allows manipulation of the program counter via the \texttt{br} instruction.

\subsubsection{Solution}
By comparing the semantics between \while{} in~\Cref{fig:while1} and \low{}
in~\Cref{fig:low} we find major differences. The first one is the reliance of 
\low{} to a program counter which keeps track of execution, whereas \while{}
executes statements from left to right. Second,
the sequencing rule in \while{} dictates that statements are 
removed from the program state\footnote{We are not referring to the
  store, but to the internal, algebraic state.} upon completion. On the
other hand, \low{} keeps the program state intact at all times. Finally, there
is a stark contrast between the two languages in the way they handle
\texttt{while} loops.

To address the above issues we introduce a new sequencing primitive
\texttt{;;}$_{c}$ and a new looping primitive \texttt{loop} for \low{}, which
prohibit illicit control flow and properly propagate the internal state.
Furthermore, we change the semantics of the singleton \texttt{assign}
instruction so that it mirrors the peculiarity of its \while{} counterpart. The
additions  can be found in~\Cref{fig:low+}.

\begin{figure}
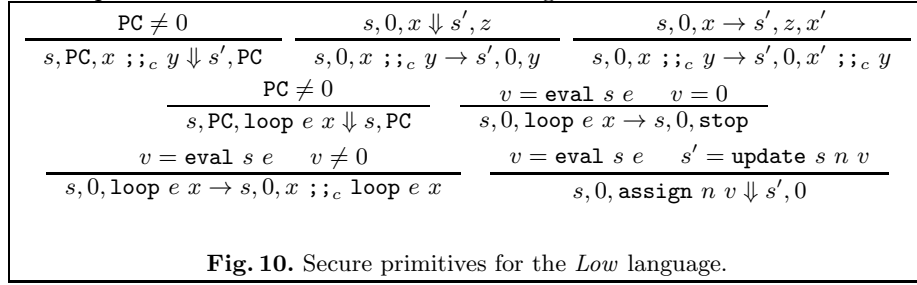

  \begin{framed}
    \framegather
    \begin{gather*}
        \inference{\texttt{PC} \neq 0}{\rets{s, \texttt{PC}, x~\texttt{;;}_{c}~y}{s\pr, \texttt{PC}}} ~
        \inference{\rets{s, 0, x}{s\pr, z}}{\goes{s, 0, x~\texttt{;;}_{c}~y}{s\pr, 0, y}} ~
        \inference{\goes{s, 0, x}{s\pr, z, x\pr}}{\goes{s, 0, x~\texttt{;;}_{c}~y}{s\pr , 0, x\pr~\texttt{;;}_{c}~y}} \\
        \inference{\texttt{PC} \neq 0}{\rets{s, \texttt{PC}, \texttt{loop}~
            e~x}{s , \texttt{PC}}} \quad
        \inference{v = \texttt{eval}~s~e & v = 0}{\goes{s, 0, \texttt{loop}~
            e~x}{s , 0, \texttt{stop}}} \\
        \inference{v = \texttt{eval}~s~e & v \neq 0}{\goes{s, 0, \texttt{loop}~
            e~x}{s, 0, x~\texttt{;;}_{c}~\texttt{loop}~e~x}} \quad
        \inference{v = \texttt{eval}~s~e & s\pr =
        \texttt{update}~s~n~v}{\rets{s,0,\texttt{assign}~n~v}{s\pr,0}}
    \end{gather*}
    \caption{Secure primitives for the \low{} language.}
    \label{fig:low+}
  \end{framed}
\end{figure}

We may now define the simple ``embedding'' natural transformation $\sigma_{E} : \Sigma
\nat \Sigma_{L}$, which maps \texttt{skip} to \texttt{stop}, assignments to
\texttt{assign}, sequencing to \texttt{;;}$_{c}$ and \texttt{while} to \texttt{loop}.

\begin{wrapfigure}{r}{0.55\textwidth}
\vspace*{-0.1cm}
\begin{tikzpicture}[node distance=2cm,thick,scale=0.9, every node/.style={scale=0.9}]
  \node (A) at (0, 0) {\texttt{while}~(\texttt{lit} 0)~p};
  \node (B) at (4, 0) {\goes{s}{s, \texttt{skip}}};
  \node (C) at (0, -1.7) {\texttt{loop}~(\texttt{lit 0})~p};
  \node (D) at (4, -1.7) {\textcolor{purple}{\rets{s,1}{s,1}}};

\draw[->] (A) edge node[below] {$\rho^{*}$} (B);
\draw[->] (A) edge node[left,black] {$\sigma^{*}_{E} \circ \Sigma^{*} b^{c}_{L}$} (C);
\draw[->][blue] (B) edge node[left,black] {$b^{c}_{L} \circ B^{c} \sigma^{*}_{E}$} (D);
\draw[->][red] (C) edge node[above,black] {$\rho^{*}_{L}$} (D);
\end{tikzpicture}
\vspace*{-0.1cm}
\caption{The coherence criterion for $(\sigma_{E},b_{L})$.}
\label{fig:cohLowS}
\end{wrapfigure}

\Cref{fig:cohLowS} shows success of the coherence criterion for the
\texttt{while} case. Since the diagram commutes for all cases, $(\sigma_{E},b_{E})$
is a map of GSOS laws between \while{} and the secure version of \low{}. This
guarantees that, remarkably, despite the presence of branching, a low-level
attacker cannot illicitly access code that is unreachable on the high-level.
Regardless, the solution is a bit contrived in that the new \low{} primitives
essentially copy what \while{} does. This is partly because the above are
complex issues involving radically different
languages but also due to the current limitations of the underlying
theory. We elaborate further on said limitations, as well as advantages and
future improvements, at~\Cref{sec:proscons}.

\section{Local state encapsulation}
\label{sec:local}

High-level programming language abstractions often involve some sort of private
state space that is protected from other objects. Basic examples include
functions with local variables and objects with private members. Low-level
languages do not offer such abstractions but when it 
comes to \emph{secure architectures}, there is some type of \emph{hardware
sandboxing}~\footnote{Examples of this are enclaves in Intel
SGX~\cite{DBLP:journals/iacr/CostanD16} and object capabilities in
CHERI~\cite{DBLP:conf/sp/WatsonWNMACDDGL15}.} to facilitate the need for
\emph{local state encapsulation}. Compilation schemes that respect
confidentiality properties have been a central subject in secure compilation
work~\cite{DBLP:journals/toplas/PatrignaniAS0CP15,DBLP:conf/csfw/AgtenSJP12,DBLP:conf/popl/FournetSCDSL13,DBLP:conf/csfw/TsampasDP19},
dating all the way back to Abadi's seminal paper~\cite{DBLP:conf/ecoopw/Abadi99}.

In this example we will explore how local state encapsulation fails due to
lack of stack clearing~\cite{DBLP:conf/csfw/TsampasDP19,
  DBLP:conf/esop/SkorstengaardDB18}. We begin by extending \while{} to support 
blocks which have their own private state, thus introducing \whileb{}. More
precisely, we add the \texttt{frame}\ST{(call)} and \texttt{return} commands
that denote the beginning and end of a new block. We also have to modify
the original behavior functor $B$ to act on a stack of stores by simply
specifying $B_{B}X = [S] -> [S] \times \mathrm{Maybe}~X$, where $[S]$
denotes a list of stores. For reasons that will become apparent later on, we
shall henceforth consider stores of a certain length, say $L$. %

\begin{figure}
  \begin{framed}
    \framegather
    \begin{gather*}
      \inference{}{\rets{m, \texttt{skip}}{m}} \quad
      \inference{v = \texttt{eval'}~m~e & m\pr = \texttt{update'}~m~l~v}
      {\rets{m, l~\texttt{:=}~e}{m\pr}} \quad
      \inference{\rets{m, p}{m\pr}}{\goes{m, p ; q}{m\pr , q}} \\
      \inference{\goes{m, p}{m\pr, p\pr}}{\goes{m, p ; q}{m\pr , p\pr ; q}} \quad
      \inference{\texttt{eval'}~m~e = 0}{\goes{m, \texttt{while}~
          e~p}{m ,
          \texttt{skip}}} \\
      \inference{\texttt{eval'}~m~e \neq 0}{\goes{m, \texttt{while}~
          e~p}{m , p ; \texttt{while}~e~p}}~
      \inference{s_{0} = [0, 0, \ldots, 0]}{\rets{m, \texttt{frame}}{s_{0} :: m}}~
      \inference{}{\rets{s :: m, \texttt{return}}{m}}
    \end{gather*}
    \caption{Semantics of the \whileb{} language.}
    \label{fig:whileb}
  \end{framed}
\end{figure}

The semantics for \whileb{} can be found in~\Cref{fig:whileb}. Command
\texttt{frame} allocates a new private store by  
appending one to the stack of stores while \texttt{return} pops the top frame
from the stack. This built-in, automatic (de)allocation of frames
guarantees that there are no traces of activity, in the form of stored values,
of past blocks. The rest of the semantics are similar to \while{}, only now
evaluating an expression and updating the state acts on a
stack of stores instead of a single, infinite store and \texttt{var} expressions
act on the active, topmost frame.

\subsection{Low-level stack}

\begin{figure}
  \begin{framed}
    \framegather
    \begin{gather*}
      \inference{m\pr =
        \texttt{update}~m~(l + L * sp)~(\texttt{evalSP}~m~sp~e)}
      {\rets{(m,sp), l~\texttt{:=}~e}{(m\pr,sp)}} \quad
      \inference{sp > 0}{\rets{(m,sp), \texttt{return}}{(m,sp-1)}} \\
      \inference{\rets{(m,sp), p}{(m\pr,sp)}}{\goes{(m,sp), p ; q}{(m\pr,sp) , q}} \quad
      \inference{\goes{(m,sp), p}{(m\pr,sp), p\pr}}{\goes{(m,sp), p ; q}{(m\pr,sp) , p\pr ; q}} \\
      \inference{}{\rets{(m,sp), \texttt{skip}}{(m,sp)}} \quad \inference{
        \texttt{evalSP}~m~sp~e = 0}{\goes{(m,sp), \texttt{while}~e~p}{(m,sp) , \texttt{skip}}} \\
      \inference{\texttt{evalSP}~m~sp~e \neq 0}{\goes{(m,sp), \texttt{while}~
          e~p}{(m,sp) , p ; \texttt{while}~e~p}} \quad
      \inference{}{\rets{(m, sp), \texttt{frame}}{(m,sp+1)}}
    \end{gather*}
    \caption{Semantics of the \whilestack{} language.}
    \label{fig:stack}
  \end{framed}
\end{figure}

In typical low-level instruction sets like the Intel x86~\cite{intel2016} or
MIPS~\cite{mips2016} there is a single, continuous \emph{memory} which is 
partitioned in \emph{frames} by using processor registers. \Cref{fig:stack} shows
the semantics of \whilestack{}, a 
variant of \whileb{} with the same syntax and which uses a simplified low-level
stack mechanism. The difference is
that the stack frames are all sized $L$, the same size as each individual store
in \whileb{}, so at each \texttt{frame} and \texttt{return} we need only increment and
decrement the \emph{stack pointer}. The presence of the stack pointer, which is
essentially a natural number, means that the behavior of \whilestack{} is $B_{S}X = S \times
\mathbb{N} -> S \times \mathbb{N} \times \mathrm{Maybe}~X$. The new evaluation function,
\texttt{evalSP}, works similarly to \texttt{eval} in~\Cref{def:eval1}, except for
\texttt{var $l$} expressions that dereference values at offset $l + L * sp$.

\subsubsection{An insecure compiler}

\whileb{} and \whilestack{} share the same syntax so we only need a behavioral
translation, which is all about relating the two different notions of stack. We
thus define natural transformation $b_B : B_{B} \nat B_{S}$:
\begin{align*}
     & b_B : \forall X.~([S_L] -> [S_L] \times \mathrm{Maybe}~X) -> S -> \mathbb{N} -> S \times \mathbb{N} \times \mathrm{Maybe}~X \\
     & b_B~f~s~sp = (\mathrm{override}~(\mathrm{join}~m)~s, \mathrm{len}~m, y)~\mathtt{where}~(m, y) = f~(\text{div}~s~sp) \\
     & \qquad \text{div}~s~sp = (\text{take}~L~s) :: (\text{div}~(\text{drop}~L~s)~(sp - 1)) \\
     & \qquad \mathrm{override}~s'~s = s'~\texttt{++}~\mathrm{drop}~(\mathrm{len}~s')~s
\end{align*}
\begin{figure}
  \begin{center}
\begin{tikzpicture}
  \node (A) at (0, 0) {\texttt{frame}};
  \node (B) at (4, 0) {\rets{{[]}}{{[s_{0}]}}};
  \node (C) at (0, -1.5) {\texttt{frame}};
  \node (D) at (4, -1.5) {\inference{\textcolor{blue}{s' = \mathrm{override}~s_0~s}}{\rets{{(s , 0)}}{({\textcolor{red}{s}/\textcolor{blue}{s'},1)}}}};

\draw[->] (A) edge node[above] {$\rho^{*}_{B}$} (B);
\draw[->] (A) edge node[left,black] {$\Sigma^{*}_{B} b^{c}_{B}$} (C);
\draw[->][blue] (B) edge node[right,black] {$b^{c}_{B}$} (D);
\draw[->][red] (C) edge node[below,black] {$\rho^{*}_{St}$} (D);
\end{tikzpicture}
\end{center}
\caption{Failure of the criterion for $(id, b_{B})$. }
\label{fig:cohStack}
\end{figure}
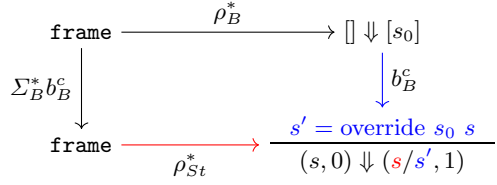

We ``divide'' an infinite list by the number of stack frames, feed the result
to the behavior function $f$ and join (``flatten'') it back together while
keeping the original part of the infinite list which extends beyond the
\emph{active stack} intact. Note that in the case of the \texttt{frame} command
$f$ adds a new frame to the list of stores. The problem is that in \whileb{} the
new frame is initialized to 0 in contrast to \whilestack{} where \texttt{frame} does not
initialize new frames. This leads to a failure of the coherence criterion for
$(id, b_{B})$ as we can see in~\Cref{fig:cohStack}. 

Failure of the criterion is meaningful in that it underlines key
problems of this compiler which can be exploited by a low-level attacker. First,
the low-level calling convention indirectly allows terms to access expired stack
frames. Second, violating the assumption in \whileb{} that new frames are
properly initialized breaks behavioral equivalence. 
For example, programs $a \triangleq \texttt{frame}~;~0~\texttt{:=}~\texttt{var}[0] + 1$ and
$b \triangleq \texttt{frame}~;~0~\texttt{:=}~1$ behave identically in \whileb{} but not in
\whilestack{}.

\subsubsection{Solution}

It is clear that the lack of stack frame initialization in \whilestack{} is the
lead cause of failure so we introduce the following fix in the \texttt{frame} rule.
\begin{align*}
  & \inference{m\pr = (\text{take}~(L * sp)~m)~\texttt{++}~s_{0}~\texttt{++}~(\text{drop}~((L+1) * sp)~m)}{\rets{(m, sp), \texttt{frame}}{(m\pr,sp+1)}}
\end{align*}

\begin{wrapfigure}{r}{0.5\textwidth}
\vspace*{-0.1cm}
\begin{tikzpicture}[node distance=2cm,thick,scale=0.9, every node/.style={scale=0.9}]
  \node (A) at (0, 0) {\texttt{frame}};
  \node (B) at (4, 0) {\rets{{[]}}{{[s_{0}]}}};
  \node (C) at (0, -1.6) {\texttt{frame}};
  \node (D) at (4, -1.6) {\inference{\textcolor{purple}{s' = s_{0}~\texttt{++}~(\text{drop}~L~s)}}{\rets{{(s , 0)}}{({\textcolor{purple}{s'},1)}}}};

\draw[->] (A) edge node[below] {$\rho^{*}_B$} (B);
\draw[->] (A) edge node[left,black] {$\Sigma^{*}_{B} b^{c}_{B}$} (C);
\draw[->][blue] (B) edge node[right,black] {$b^{c}_{B}$} (D);
\draw[->][red] (C) edge node[above,black] {$\rho^{*}_{St}$} (D);
\end{tikzpicture}
\caption{The coherence criterion for $(id, b_{B})$ under the new \texttt{frame}
  rule.}
\label{fig:cohStackSuc}
\end{wrapfigure}
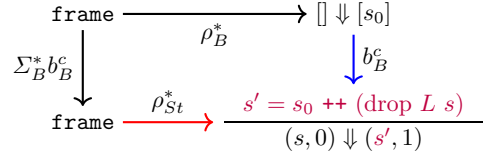

The idea behind the new \texttt{frame} rule is that the L-sized block in position
$sp$, which is going to be the new stack frame, has all its values replaced by
zeroes. As we can see in~\Cref{fig:cohStackSuc}, the coherence criterion is now
satisfied and the example described earlier no longer works.

\section{Discussion and future work}
\label{sec:proscons}

\subsubsection{On Mathematical Operational Semantics}
\label{subsec:bialg}

The cases we covered in this paper are presented using Plotkin's
Structural Operational Semantics~\cite{DBLP:journals/jlp/Plotkin04a}, yet their
foundations are deeply categorical~\cite{DBLP:conf/lics/TuriP97}. Consequently,
for one to use the methods presented in this paper, the semantics involved must
fall within the framework of distributive laws, the generality of which has
been explored in the past~\cite{DBLP:conf/ctcs/Turi97,
  DBLP:journals/entcs/Watanabe02}, albeit not exhaustively. To the best of our 
knowledge, \Cref{sec:control} and \Cref{sec:local} show the first instances of
distributive laws as low-level machines.

Bialgebraic semantics are well-behaved in that \emph{bisimilarity} is a
\emph{congruence}~\cite{DBLP:journals/iandc/GrooteV92}. We used that to show
that two bisimilar programs will remain bisimilar irrespective of the 
context they are plugged into, which is not the same as contextual equivalence.
However, full abstraction is but one of a set of proposed characterizations of secure compilation~\cite{DBLP:conf/esop/PatrignaniG19, abate2018journey} and the key intuition is
that our framework is suitable as long as bisimilarity adequately captures the
threat model. While this is the case in the examples, we can imagine situations
where the threat model is \emph{weaker} than the one implied by bisimilarity.

For example, language \whilep{} in~\Cref{sec:whilep} includes labels in its
transition structure and the underlying model is accurate in that \whilep{}
terms can manipulate said  labels. However, if we were to remove \texttt{obs} statements
from the syntax, the threat model becomes weaker than the one implied by
bisimilarity. Similarly in~\Cref{sec:control} and \low{}, where our threat model assumes
that the program counter can be manipulated by a low-level attacker. If we
impose a few constraints on the threat model, for instance by disallowing
arbitrary branching and only consider the initial program counter to be zero,
bisimilarity is suddenly too strong.

This issue can be classified as part of the broader effort towards coalgebraic
weak bisimilarity, a hard problem which has been an object of intense, ongoing
scientific
research~\cite{DBLP:journals/entcs/RotheM02,DBLP:conf/calco/Popescu09,
  DBLP:journals/lmcs/HasuoJS07, DBLP:journals/corr/Brengos13,
  DBLP:journals/ita/Rutten99, DBLP:journals/entcs/RotheM02,
  DBLP:conf/concur/BonchiPPR15}. Of particular interest is the work by
Abou-Saleh and Pattinson~\cite{DBLP:journals/entcs/Abou-SalehP11,
  DBLP:phd/ethos/AbouSaleh14} about bialgebraic semantics, where they use
techniques introduced in~\cite{DBLP:journals/lmcs/HasuoJS07} to obtain a more
appropriate semantic domain for effectful languages as a final coalgebra in the
Kleisli category of a suitable monad. This method is thus a promising avenue
towards exploring weaker equivalences in bialgebraic semantics, as long as these
can be described by a monad.

\subsubsection{On Maps of Distributive Laws}

Maps of distributive laws were first mentioned by Power and
Watanabe~\cite{DBLP:journals/entcs/PowerW99}, then elaborated as
\emph{Well-behaved translations} by
Watanabe~\cite{DBLP:journals/entcs/Watanabe02} and more recently by Klin and
Nachyla~\cite{DBLP:conf/calco/KlinN15}. 
Despite the few examples presented in~\cite{DBLP:journals/entcs/Watanabe02,
  DBLP:conf/calco/KlinN15}, this paper is the first major attempt towards
applying the theory behind maps of distributive laws in a concrete problem, let
alone in secure compilation.

From a theoretical standpoint, maps of distributive laws have remained largely
the same since their introduction. This comes despite the interesting
developments discussed in~\Cref{subsec:bialg} regarding distributive laws, which
of course are the subjects of \emph{maps} of distributive laws. We speculate the
existence of \emph{Kleisli} maps of distributive laws that guarantee
preservation of equivalences weaker than bisimilarity. We plan to develop this
notion and explore its applicability in future work.

\subsubsection{Conclusion}

It is evident that the systematic approach presented in this work may
significantly simplify proving compiler security as it involves a single,
simple coherence criterion. Explicit reasoning about program contexts is no
longer necessary, but that does not mean that contexts are irrelevant. On the
contrary, the guarantees are implicitly \emph{contextual} due 
to the well-behavedness of the semantics. Finally, while the overall usability
and eventual success of our method remains a question mark as it depends on the
expressiveness of the threat model, the body of work in coalgebraic weak
bisimilarity and distributive laws in Kleisli categories suggests that there are
many promising avenues for further progress.

\printbibliography

\end{document}

